\newcommand{\word}[1]{\mathbf{#1}}
\newcommand{\CC}{\mathcal{C}}
\newcommand{\yv}{\word{y}}
\newcommand{\ev}{\word{e}}
\newcommand{\N}{{{\mathbb N}}}
\makeindex \addtolength{\hoffset}{-1,5cm}
\newtheorem{theorem}{Theorem}[section]
\newtheorem{corollary}[theorem]{Corollary}
\newtheorem{definition}[theorem]{Definition}
\newtheorem{example}[theorem]{Example}
\newtheorem{lemma}[theorem]{Lemma}
\newtheorem{proposition}[theorem]{Proposition}
\newtheorem{remark}[theorem]{Remark}
\newenvironment{proof}[1][Proof]{\noindent\textbf{#1.} }{\ \rule{0.5em}{0.5em}}
\begin{document}

\title{On the Rank Decoding Problem Over Finite Principal Ideal Rings}
\author{{\small Herv\'e Tale Kalachi}\thanks{{\footnotesize Department of
Computer Engineering, National Advanced School of Engineering of Yaounde,
University of Yaounde I, Cameroon. E-mail: \texttt{%
herve.tale@univ-yaounde1.cm}}}, {\small Hermann Tchatchiem Kamche}\thanks{%
{\footnotesize Department of Mathematics, Faculty of Science, University of
Yaounde I, Cameroon. E-mail: \texttt{hermann.tchatchiem@gmail.com}}}}
\maketitle

\begin{abstract}
The rank decoding problem has been the subject of much attention in this
last decade. This problem, which is at the base of the security of
public-key cryptosystems based on rank metric codes, is traditionally
studied over finite fields. But the recent generalizations of certain
classes of rank-metric codes from finite fields to finite rings have
naturally created the interest to tackle the rank decoding problem in the
case of finite rings. 
In this paper, we show that solving the rank decoding problem
over finite principal ideal rings is at least as hard as the rank decoding problem over
finite fields. We also show that computing the minimum rank distance for
linear codes over finite principal ideal rings is equivalent to the same
problem for linear codes over finite fields.
Finally, we provide combinatorial type algorithms for solving the rank decoding problem over finite chain
rings together with their average complexities.
\end{abstract}

\textbf{Keywords}: Rank Decoding Problem; Finite Principal Ideal Rings; Rank
Metric Codes.

\section{Introduction}

Rank metric codes are subspaces whose elements can be seen as matrices and
the distance between two elements is the rank of their difference \cite%
{D78,Gab85}. These codes have received a lot of attentions these recent
years, especially for their applications in space time coding \cite%
{Lusina2003maximum}, network coding \cite{Silva2008rank} and cryptography
\cite{GPT91}. One could see, among others: the definition of a new family of
structured codes equipped with the rank metric together with efficient
decoding algorithms \cite{GMRZ13,Aragon2019low}, the generalizations of
several known classes of structured rank metric codes from finite fields to
other poorer structures like Galois rings \cite{RNP20,Renner2020low} or
finite principal ideal rings \cite{Kamche2019rank,KKF21}, each of these
generalizations coming with efficient decoding algorithms for the new
underlined code families. It is important to note that the question of
whether one can decode a given code or not is quite fundamental in coding
theory and code-based cryptography. The answer of this question is generally
obvious when dealing with a code that has a known structure, since each
structured code generally comes with a decoding algorithm. But if the
structure of the code is unknown, this question is well known as the
``problem of decoding a random linear code''.

\paragraph{Decoding Problem for a Random Linear Code.}

The general framework for setting up a code based cryptosystem \cite%
{M78,Bucerzan2017evolution} is to take a generator matrix $\boldsymbol{G}$
of a structured linear code that will undergo some transformations, giving
place to a new generator matrix $\boldsymbol{G}_{pub}$ that is, by
assumption, indistinguishable from a generator matrix of a random linear
code \cite{CFS01}. The matrix $\boldsymbol{G}_{pub}$ is then published by
Alice together with a correction capacity $t$ that depends of the
transformations applied. To send a message $\mathbf{m}$ to Alice, Bob
generates a random error $\mathbf{e}$ of weight $t$ and send the cryptogram $%
\mathbf{y} = \mathbf{m} \boldsymbol{G}_{pub} + \mathbf{e}$. An attacker that
intercepts $\mathbf{y}$ and wants to find $\mathbf{m}$ must then solve the
problem of decoding a ``random'' linear code. Concretely, the problem of
decoding a random linear code is, given a linear code $\mathcal{C}$ (or its
generator matrix), a vector $\mathbf{y}$ of the ambient space and an integer
$t$, to find a word $\mathbf{c}$ in $\mathcal{C}$ such that the distance
from $\mathbf{y}$ to $\mathbf{c}$ is at most $t$. This problem is well known
as being NP-complete in the Hamming metric \cite{BMT78}, and was also shown
recently to be NP-complete for the Lee metric \cite{Weger2020hardness}.
However, the case that interests us in this article is when the rank metric
is used. In that specific case of rank metric codes, a randomized reduction
from the same problem in the Hamming metric was proposed in \cite{GZ16}.

\paragraph{Solving the Rank Decoding Problem.}

From an algorithmic point of view, there are two main techniques for solving
the Rank Decoding Problem. The oldest one is of combinatorial type and was
introduced in \cite{CS96}. This technique can be seen as a generalization to
the rank metric of information set decoding algorithms \cite{P62} in which
one has to look for a set of positions that contents the error support. Note
that in the Hamming metric, the support of an error is the set of non-zero
positions of that error and, given the support or a slightly larger set
containing the support of an error, one can find the associated error in
polynomial time by solving a linear system. The situation is quiet the same
in the rank metric with the difference that each coordinate of a vector is
seen as a vector with coefficients in a base field and, the support of the
vector is then the vector subspace generated by its coefficients. The
combinatorial algorithms thus works by guessing the support of the error and
then solve a system of linear equations to find the coordinates of the error
components in a basis of that support. The complexity of such an algorithm
is then dominated by the inverse of the probability that a vector subspace
chosen randomly is the good one. These algorithms have undergone several
improvements, first in \cite{OJ02} and very recently in \cite{GRS16,AGHT18}
where the authors guess a slightly bigger vector subspace containing the
support of the error.

Besides combinatorial techniques, there are also algebraic techniques for
which the main idea is to translate the notion of rank into an algebraic
setting. The first approach from \cite{LP06} first reduces the rank decoding
problem to the search of minimum rank codewords in an extended linear code.
This approach has been the subject of several recent improvements in \cite%
{BBCGPSTV20,BBBGNRT20}. Another approach based on linearized polynomials was
also proposed in \cite{GRS16}. It should be noted that the algorithms cited
above only apply to codes whose alphabets are finite fields.


\paragraph{Structured Rank Metric Codes Over Finite Rings.}

An important point for setting up a code-based cryptosystem like in \cite%
{M78} is to have a structured family of codes (that is to say code families
with efficient decoding algorithms). This last decade has seen the birth of
several works going in that direction, in particular for rank metric codes
over finite rings. Tchatchiem and Mouaha \cite{Kamche2019rank} first
proposed a generalisation of the well known family of Gabidulin codes to
finite principal ideal rings. This work was followed by \cite%
{Puchinger2021efficient} where the authors provide an iterative decoding
algorithm for Gabidulin codes over Galois rings with provable quadratic
complexity in the code length. Note that the previous algorithm is similar
to the iterative algorithm presented in \cite{Kamche2019rank} for
interleaved Gabidulin codes. Although Gabidulin codes over finite rings have
many applications in network coding and space time coding like in the case
of finite fields, they are not too much attractive in cryptography when
thinking about the story of their use in the Gabidulin, Paramonov and
Tretjakov (GPT) cryptosystem \cite{GPT91}. To put it in a nutshell, due to
the rich algebraic structure of Gabidulin codes, the original GPT
cryptosystem was drastically broken in a series of structural attacks from
Gibson \cite{G95,G96} and Overbeck \cite{O05,O05a,O08}. Even if several
variants where proposed to avoid these attacks \cite%
{Gabidulin2001modified,Gabidulin2008attacks,Rashwan2010smart,Loidreau2010designing,Rashwan2011security}
almost all of them are now found to be vulnerable \cite%
{Otmani2018improved,Horlemann2018extension,Kalachi2022failure}.

In \cite{Renner2020low}, the recent and promising family of Low-Rank
Parity-Check (LRPC) codes \cite{GMRZ13} was also generalized to the rings of
integers modulo a prime power. This work was followed by the paper of
Renner, Neri, and Puchinger \cite{RNP20} that defined LRPC codes over Galois
rings, the paper of Kamwa, Tale, and Fouotsa \cite{Kamwa2021generalization}
that generalized LRPC codes to the ring of integers modulo a positive
integer and finally the work from \cite{Kamche2021low} where the authors
generalize LRPC codes to finite commutative rings. Note that LRPC codes is
known as having a very poorer algebraic structure and, as a consequence,
their use in code-based cryptography closes the door to structural attacks
and in this case, a cryptanalysis must focus on the problem of solving the
rank decoding problem. This is why they are very attractive in code-based
cryptography and consequently, their recent generalizations to finite rings
have naturally highlighted the possibility of doing cryptography using rank
metric codes over finite rings.

\paragraph{The Rank Decoding Problem Over Finite Rings.}

The existence of interesting code families over finite rings is certainly a
determining element for doing McEliece-like cryptography over finite rings,
but it is essential to make sure that the task of a cryptanalyst facing the
rank decoding problem over finite rings will not be facilitated compared to
finite fields. One could also wonder what more one gains by moving from
finite fields to finite rings. Note that some properties of rank metric over
finite fields do not apply to rank metric over finite rings due to zero
divisors (see Example \ref{ZeroDivisor}) and, as a consequence, the main
technique used to translate the notion of rank into an algebraic setting
like in \cite{OJ02}, is not directly applicable in the case of finite
commutative principal ideal rings (see Example \ref{ImpossibleModeling}).
Remark that this modelling way is at the kernel of the recent improvements
of algebraic algorithms for solving the rank decoding problem over finite
fields \cite{BBCGPSTV20,BBBGNRT20}. Thus, the existence of zero divisors
over finite rings could help to avoid some existing attacks over finite
fields. All these elements make the rank decoding problem over finite rings
very attractive for code-based cryptography and give rise to several
questions around this problem. A natural one is its difficulty compared to
the same problem over finite fields. It would also be interesting to provide
practical algorithms for solving this problem as well as their complexities.
Another question which is generally related to the decoding problem is the
calculation of the minimum distance for a rank metric code over finite rings.

\paragraph{Our Contribution.}

In this paper, we use the structure theorem for finite commutative rings
\cite{Mcdonald1974finite} to show that solving the rank decoding problem
over finite principal ideal rings is equivalent to solve the same problem
over finite chain rings. We then use the socle and the injective envelope of
modules over finite chain rings to show that the rank decoding problem over
finite chain rings is at least as hard as the rank decoding problem over
finite fields. We also show that computing the minimum rank distance for
linear codes over finite principal ideal rings is equivalent to the same
problem for linear codes over finite fields as in the case of hamming metric
\cite{Walker1999algebraic,Greferath2004finite,Dougherty2009mds}.
Furthermore, we provide combinatorial type algorithms similar to \cite%
{GRS16,AGHT18} for solving the rank decoding problem over finite chain
rings. To evaluate the average complexity of our algorithms, we use the
shape of modules to give a formula that allows to count the number of
submodules of fixed rank for a finitely generated module over a finite chain
ring. %
%

\paragraph{Organization of the Paper.}

The rest of the paper is organized as follows: in Sections \ref{sec:prelims}
and \ref{sec:finiteChain} we give some notations, definitions, and mathematical results that are useful throughout the paper. In
Section \ref{sec:RDProblem}, we formally define the rank decoding problem
over finite principal ideal rings as well as its dual version and show that
the two problems are equivalent. The combinatorial type algorithms for
solving the rank decoding problem over finite principal ideal rings are then
presented with a complexity analysis in Section \ref{sec:Solving_RDProblem}
and finally, we conclude in Section \ref{sec:conclusion}.

%
%
%

\section{Preliminaries}

\label{sec:prelims} We denote by ${{\mathbb{N}}}$ the set of positive
integers including $0$, and ${{\mathbb{N}}}^*$ the set ${{\mathbb{N}}}$
excluding $0$. Let $m$ and $n$ be two elements of ${{\mathbb{N}}}^*$ and $R$
a finite commutative principal ideal ring, that is to say, a finite ring in which each ideal is generated by one element. 
The set of all $m \times n$
matrices with entries from $R$ will be denoted by $R^{m\times n}$. In
general, we will use bold uppercase letters for matrices and bold lowercase
letters for vectors. 

In \cite{Gab85}, Gabidulin used Galois extensions of finite fields to give vector representations of matrices and thus defined the notion of rank for vectors. Thanks to \cite{Kamche2019rank}, this notion can be extended to finite principal ideal rings by defining Galois extension for finite principal ideal rings. In this section, we recall how to construct such a Galois extension. Note that by the structure theorem for finite commutative rings \cite[Theorem VI.2]{Mcdonald1974finite}, any finite commutative principal ideal ring can be
decomposed as a direct sum of finite commutative local principal ideal
rings, that is to say, finite chain rings. Galois extensions of finite chain
rings can then be used to construct Galois extensions of finite principal
ideal rings.

\subsection{Galois Extensions of Finite Chain Rings}

\label{subsec:FiniteChainSings}A chain ring is a ring whose ideals
are linearly ordered by inclusion \cite{Mcdonald1974finite}. A finite chain
ring have exactly one maximal ideal with is generated by one element. As an
example, given $k \in {{\mathbb{N}}}^*$ and a prime number $p$, the ring $%
\mathbb{Z}_{p^{k}}=\mathbb{Z}/p^{k}\mathbb{Z}$ of integers modulo $p^{k}$ is
a finite chain ring with $p\mathbb{Z}_{p^{k}}$ as the unique maximal ideal.

In this subsection, we assume that $R$ is a finite commutative chain ring
with maximal ideal $\mathfrak{m}$ and residue field $\mathbb{F}_{q}=R/%
\mathfrak{m}$. Let $\pi $ be a generator of $\mathfrak{m}$ and $\nu $ the
nilpotency index of $\pi $, i.e., $\nu$ is the smallest element of ${{%
\mathbb{N}}}^*$ such that $\pi ^{\nu }=0$. Then, any element $a$ in $R$ can
be decomposed into $a=\pi ^{i}u$ where $u$ is a unit of $R$ and $i$ is a
unique element in $\left\{ 0,\ldots ,\nu \right\} $. The natural projection $%
R\rightarrow R/\mathfrak{m} $ is denoted by $\Psi $ and can be extended
coefficient-by-coefficient to polynomials over $R$.

Let $h\in $ $R\left[ X\right] $ be a monic polynomial of degree $m$ such
that $\Psi \left( h\right) $ is irreducible in $\mathbb{F}_{q}\left[ X\right]
$. Set $S=R\left[ X\right] /\left( h\right) $, where $\left( h\right) $
denotes the ideal of $R\left[ X\right]$ generated by $h$. Then, $S$ is a
local Galois extension of $R$ of degree $m$, with maximal ideal $\mathfrak{M}%
=\mathfrak{m}S$ and residue field $\mathbb{F}_{q^{m}}=S/\mathfrak{M}$. Also
note that $S$ can be seen as a free $R-$module of rank $m$. Since $R$ is a
finite chain ring, $S$ is also a finite chain ring and $\pi$ is a generator
of $\mathfrak{M}$. A Galois extension of $\mathbb{Z}_{p^{k}}$ is called a
Galois ring. We refer the reader to \cite{Mcdonald1974finite} for more
details about Galois extensions of finite chain rings, where a
characterization of finite chain rings using Galois rings is also given in
\cite[Theorem XVII.5]{Mcdonald1974finite}. The following example provides a construction of a Galois extension of $\mathbb{Z}_{8}$ of degree $4$. 

\begin{example}
\label{ExampleGaloisExtension} Let $R=\mathbb{Z}_{8}$ and $%
h=X^{4}+4X^{3}+6X^{2}+3X+1\in R\left[ X\right] $. Then $\Psi \left( h\right)
=\allowbreak X^{4}+X+1$ irreducible in $\mathbb{F}_{2}\left[ X\right] $.
Therefore, $S=R\left[ X\right] /\left( h\right) $ is a Galois extension of $%
R$ of degree $4$.
\end{example}

\subsection{Galois Extension of Finite Principal Ideal Rings}

\label{subsec: GaloisExtensionOfFPIR}As previously said, a principal ideal
ring $R$ is isomorphic to a product of finite chain rings. That is to say,
there exists a positive integer $\rho $ such that $R\cong R_{(1)}\times
\cdots \times R_{(\rho )}$, where each $R_{(j)}$ is a finite chain ring.
Using this isomorphism, we identify $R$ with $R_{(1)}\times \cdots \times
R_{(\rho )}$. As an example, if $\eta =p_{1}^{k_{1}}\times \cdots \times
p_{d}^{k_{d}}$ where $p_{1},\ldots ,p_{d}$ are prime numbers and $k_1,
\ldots, k_d$ belonging to ${{\mathbb{N}}}^*$, then the ring $\mathbb{Z}%
_{\eta }$ is isomorphic to the product of finite chain rings $\mathbb{Z}%
_{p_{1}^{k_{1}}},\ldots ,\mathbb{Z}_{p_{d}^{k_{d}}}$, that is to say $%
\mathbb{Z}_{\eta }\cong \mathbb{Z}_{p_{1}^{k_{1}}}\times \cdots \times
\mathbb{Z}_{p_{d}^{k_{d}}}$.

For any $j\in \left\{ 1,\ldots ,\rho \right\} $, since $R_{(j)}$ is a finite
chain rings, let $S_{(j)}$ be a Galois extension of $R_{(j)}$ of degree $m$.
$S:=S_{(1)}\times \cdots \times S_{(\rho )}$ is a free $R-$module of degree $%
m $. Since each $S_{(j)}$ is a finite principal ideal ring, $S$ is also a
finite principal ideal ring. Furthermore, as specified in \cite[pp.7720]%
{Kamche2019rank}, $S$ is a Galois extension of $R$ and there exists a monic
polynomial $h\in R[X]$ of degree $m$ such that $S\cong R[X]/(h)$.

\begin{example}
Let us construct a Galois extension of $R=\mathbb{Z}/40%
\mathbb{Z}$ of degree $4$. Let $R_{(1)}=\mathbb{Z}/5\mathbb{Z}$ and $R_{(2)}=%
\mathbb{Z}/8\mathbb{Z}$. The map $\Phi :R\rightarrow R_{(1)}\times R_{(2)}$
given by $x+40\mathbb{Z}\longmapsto \left( x+5\mathbb{Z},x+8\mathbb{Z}%
\right) $ is a ring isomorphism and its inverse $\Phi ^{-1}$ is defined by 
$\left( x+5 \mathbb{Z},y+8\mathbb{Z}\right) \longmapsto xe_{1}+ye_{2}$, 
where $e_{1}=16+40\mathbb{Z}$ and $e_{2}=25+40\mathbb{Z}$.  Consider $h_{\left( 1\right)
}=X^{4}+4X^{2}+4X+2\in R_{(1)}\left[ X\right] $, $h_{\left( 2\right)
}=X^{4}+4X^{3}+6X^{2}+3X+1\in R_{(2)}\left[ X\right] $, $S_{(1)}=R_{(1)}%
\left[ X\right] /\left( h_{\left( 1\right) }\right) $, and $S_{(2)}=R_{(2)}%
\left[ X\right] /\left( h_{\left( 2\right) }\right) $. Since $R_{(1)}$ is a
finite field and $h_{(1)}$ is irreducible over $R_{(1)}$, then $S_{(1)}$ is
a Galois extension of $R_{(1)}$ of degree $4$. Furthermore, from Example \ref%
{ExampleGaloisExtension}, $S_{(2)}$ is also a Galois extension of $R_{(2)}$ of
degree $4$ and then, $S_{(1)}\times S_{(2)}$ is a Galois extension of $%
R_{(1)}\times R_{(2)}$ of degree $4$. If we extend $\Phi ^{-1}$
coefficient-by-coefficient to $R_{(1)}\left[ X\right] \times $ $R_{(2)}\left[
X\right] $, by taking $h=\Phi ^{-1}\left( h_{\left( 1\right) },h_{\left(
2\right) }\right) =\allowbreak X^{4}+20X^{3}+14X^{2}+19X+17$ we have $%
S_{(1)}\times S_{(2)}\cong R\left[ X\right] /\left( h\right) $ and so, $R%
\left[ X\right] /\left( h\right) $ is a Galois extension of $R$ of degree $4$%
.
\end{example}

\subsection{Rank Metric Codes Over Finite Principal Ideal Rings}
An introduction to rank metric codes over finite principal ideal ring can be found in \cite{Kamche2019rank}. Here we give some fundamental notions needed for the sequel of the paper. Let us start by recalling the following definitions of the rank for a module over a finite principal ideal ring, the rank for a matrix, and a vector with coefficients in
a finite principal ideal ring. Note that the notion of rank for a module is a generalization of the well known notion of dimension for a vector space. So we have the following definition.

\begin{definition}[Rank of a Module]
Let $M$ be a finitely generated $R-$module. The rank of $M$, denoted by $%
rk_{R}\left( M\right) $ or simply $rk\left( M\right) $, is the smallest
number of elements in $M$ generating $M$ as an $R-$module. The rank of the module $%
\left\{ 0\right\} $ is by convention $0$.
\end{definition}

Since the columns (or the rows) of a matrix with coefficients in $R$
generate an $R-$module, the previous notion of rank for a $R-$module
naturally extends to matrices with coefficients in $R$.

\begin{definition}[Rank of a matrix]
Let $\boldsymbol{A}\in R^{m\times n}$. The rank of $\boldsymbol{A}$ ,
denoted by $rk_{R}\left( \boldsymbol{A}\right) $, or simply $rk\left(
\boldsymbol{A}\right) $, is the rank of the $R-$submodule generated by the
column vectors (or row vectors) of $\boldsymbol{A}$.
\end{definition}

A simple way to compute the rank of a matrix from $R^{m\times n}$ is to
compute its Smith normal form and count the number of non-zero elements on
the diagonal (see \cite[Proposition 3.4]{Kamche2019rank}). Also remark that
thanks to the notion of Galois extension for finite principal ideal rings, $%
R^m$ is isomorphic to a Galois extension $S$ of $R$ so that each element of $%
S$ can be considered as a vector of the $R-$module $R^m$. Consequently, we
have the following definition that defines the rank for vectors in $S^n$.

\begin{definition}[Rank of a vector]
Let $\mathbf{u}=\left( u_{1},\ldots ,u_{n}\right) \in S^{n}$.

1) The support of $\mathbf{u}$, denoted $supp(\mathbf{u)}$, is the $R-$%
submodule of $S$ generated by $\left\{ u_{1},\ldots ,u_{n}\right\} $.

2) The rank of $\mathbf{u}$, denoted $rk_{R}\left( \mathbf{u}\right) $, or
simply $rk\left( \mathbf{u}\right) $, is the rank of the support of $\mathbf{%
u}$.
\end{definition}

So, as in the case of fields, the map $S^{n}\times S^{n}\rightarrow\mathbb{N}
$ given by $\left( \mathbf{u,v}\right) \mapsto rk_{R}\left( \mathbf{u-v}%
\right) $ is a metric \cite{Kamche2019rank}. Also note that the rank of a
vector can be computed using its matrix representation. Indeed, since $S$
is also a free $R-$module, let $\left( b_{1},\ldots ,b_{m}\right) $ be a
basis of $S$ and consider $\mathbf{a}=\left( a_{1},\ldots ,a_{n}\right) \in
S^{n}$. For $j=1,\ldots ,n$, $a_{j}$ can be written as $a_{j}=\sum_{1\leq
i\leq m}a_{i,j}b_{i}$, where $a_{i,j}\in R$. The matrix $\mathbf{A=}\left(
a_{i,j}\right) _{1\leq i\leq m,\ 1\leq j\leq n}$ is then the matrix
representation of $\mathbf{a}$ in the $R-$basis $\left( b_{1},\ldots
,b_{m}\right) $ and $rk_{R}\left( \mathbf{a}\right) =rk_{R}\left(
\boldsymbol{A}\right) $.

It is important to underline the fact that some properties of the rank for
matrices over finite fields do not generalize for matrices over finite rings
due to zero divisors. As an example, the rank of a matrix $\boldsymbol{A}$
with entries in a field $F$ is the order of a highest order non-vanishing
minor of $\boldsymbol{A}$ and for any non-zero element $\alpha $ from $F$,
both $\boldsymbol{A}$ and $\alpha \boldsymbol{A}$ have the same rank.
However, those properties are not always true in finite rings.

\begin{example}
\label{ZeroDivisor} Let $\boldsymbol{A}=\left(
\begin{array}{cc}
2 & 0 \\
0 & 2%
\end{array}%
\right) $ be a matrix with entries in $\mathbb{Z}_{4}$. Since $\boldsymbol{A}
$ is in the Smith normal form, $rk(\mathbf{A)}=2$ while $rk(2 \boldsymbol{A}%
)=0$. Moreover, the order of a highest order non-vanishing
minor of $\boldsymbol{A}$ is $1$, which is different from the rank of $\boldsymbol{A}$.
\end{example}

Since $R=R_{(1)}\times \cdots \times R_{(\rho )}$ and $%
S=S_{(1)}\times \cdots \times S_{(\rho )}$, for any $j \in \left\{ 1,\ldots
,\rho \right\} $, we denote by $\Phi _{(j)}$ the $j$-th projection map from $%
S$ to $S_{(j)}$ in the following. We will also extend $\Phi _{(j)}$
coefficient-by-coefficient as a map from $S^{n}$ to $S_{(j)}^{n}$ and, the
restriction of $\Phi _{(j)}$ to $R^{n}$ will also be denoted by $\Phi _{(j)}$%
. We then have the following result from \cite{Dougherty2009mds} :

\begin{proposition}
\label{LocalizationOfRank} For any submodule $N$ of $R^{n}$ he have,%
\begin{equation*}
rk_{R}\left( N\right) =\max_{1\leq j\leq \rho }\left\{ rk_{R_{(j)}}\left(
\Phi _{(j)}\left( N\right) \right) \right\} .
\end{equation*}
\end{proposition}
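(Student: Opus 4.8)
The plan is to exploit the decomposition $R \cong R_{(1)} \times \cdots \times R_{(\rho)}$ through the central idempotents. Write $e_1,\ldots,e_\rho \in R$ for the idempotents attached to the factors, so that $e_j e_k = 0$ for $j \neq k$ and $e_1 + \cdots + e_\rho = 1$; extending them coefficientwise to $R^n$, any submodule $N$ of $R^n$ splits as an internal direct sum $N = \bigoplus_{j=1}^{\rho} e_j N$, since $N$ is stable under multiplication by each $e_j$. Under the identification of $R^n$ with $R_{(1)}^n \times \cdots \times R_{(\rho)}^n$, the map $e_j N \to \Phi_{(j)}(N)$ sending $e_j \mathbf{x}$ to $\Phi_{(j)}(\mathbf{x})$ is an isomorphism of $R_{(j)}$-modules; in particular, for every $\mathbf{y} \in \Phi_{(j)}(N)$ the vector $(\mathbf{0},\ldots,\mathbf{0},\mathbf{y},\mathbf{0},\ldots,\mathbf{0})$ of $R^n$ already lies in $N$. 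This identification is the only structural input needed.

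For the inequality $rk_R(N) \ge \max_{j} rk_{R_{(j)}}(\Phi_{(j)}(N))$, I would fix a generating family $\mathbf{g}_1,\ldots,\mathbf{g}_s$ of $N$ over $R$ with $s = rk_R(N)$. Since the projection $R \to R_{(j)}$ is a surjective ring homomorphism and $\Phi_{(j)} \colon R^n \to R_{(j)}^n$ is its coefficientwise extension, one has $\Phi_{(j)}(a\mathbf{g}) = \Phi_{(j)}(a)\Phi_{(j)}(\mathbf{g})$ for all $a \in R$, $\mathbf{g} \in R^n$; hence $\Phi_{(j)}(\mathbf{g}_1),\ldots,\Phi_{(j)}(\mathbf{g}_s)$ generate $\Phi_{(j)}(N)$ over $R_{(j)}$, so that $rk_{R_{(j)}}(\Phi_{(j)}(N)) \le s$ for every $j$.

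For the reverse inequality, put $r = \max_{j} rk_{R_{(j)}}(\Phi_{(j)}(N))$. For each $j$, choose a generating family $\mathbf{g}^{(j)}_1,\ldots,\mathbf{g}^{(j)}_r$ of $\Phi_{(j)}(N)$ over $R_{(j)}$, padding with zero vectors so that all $\rho$ families have the common length $r$. By the identification above, the vector $\mathbf{g}_i \in R^n$ determined by $\Phi_{(j)}(\mathbf{g}_i) = \mathbf{g}^{(j)}_i$ for all $j$ belongs to $N$; and given $\mathbf{x} \in N$, writing $\Phi_{(j)}(\mathbf{x}) = \sum_{i=1}^{r} a^{(j)}_i \mathbf{g}^{(j)}_i$ with $a^{(j)}_i \in R_{(j)}$ and setting $a_i = (a^{(1)}_i,\ldots,a^{(\rho)}_i) \in R$, one gets $\mathbf{x} = \sum_{i=1}^{r} a_i \mathbf{g}_i$. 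Thus $N$ is generated by $r$ elements, i.e. $rk_R(N) \le r$, and the two inequalities give the claim. The argument is essentially bookkeeping with the Chinese-remainder decomposition; the only point that deserves care — and the step I would present most carefully — is this reconstruction, namely that $R_{(j)}$-generators of the various $\Phi_{(j)}(N)$ assemble into genuine elements of $N$ and that scalars chosen independently over each $R_{(j)}$ patch together into an element of $R$, both of which are exactly what the splitting $N = \bigoplus_j e_j N$ guarantees.
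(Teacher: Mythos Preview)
Your argument is correct. The decomposition $N=\bigoplus_{j} e_{j}N$ via the orthogonal idempotents is exactly the right structural fact, and both inequalities are handled cleanly: pushing a minimal $R$-generating set forward through $\Phi_{(j)}$ gives the easy bound, while the reconstruction of generators and scalars from the local data gives the reverse one. The only step worth stating explicitly (and you do flag it) is that $\mathbf{g}_{i}\in N$ because each piece $(\mathbf{0},\ldots,\mathbf{g}^{(j)}_{i},\ldots,\mathbf{0})$ already lies in $e_{j}N\subset N$.

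As for comparison: the paper does not give its own proof at all --- it simply refers the reader to \cite[Corollary~2.5]{Dougherty2009mds}. Your write-up is therefore a genuine, self-contained alternative. What it buys is independence from the external reference and transparency about why the Chinese-remainder splitting is the whole story; what the citation buys is brevity. Either is acceptable, and your version would be a fine replacement if the authors wanted the paper to be self-contained on this point.
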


\begin{proof}
See \cite[Corollary 2.5]{Dougherty2009mds}.
\end{proof}

The above proposition shows that computing the rank of a
submodule $N$ over a finite principal ideal ring is equivalent to compute
the highest rank for the projections of $N$ as submodules over finite chain
rings. This result does apply also to vectors from $S^n$ as they can be
viewed as $R-$submodules when computing their ranks.

\begin{corollary}
\label{LocalizationOfRankVector}For any $\mathbf{a}\in S^{n}$,
\begin{equation*}
rk\left( \mathbf{a}\right) =\max_{1\leq j\leq \rho }\left\{ rk\left( \Phi
_{(j)}\left( \mathbf{a}\right) \right) \right\} .
\end{equation*}
\end{corollary}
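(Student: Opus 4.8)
The plan is to deduce this directly from Proposition \ref{LocalizationOfRank}, applied to the support of $\mathbf{a}$. By definition of the rank of a vector, $rk(\mathbf{a}) = rk_{R}(supp(\mathbf{a}))$, where $supp(\mathbf{a})$ is the $R$-submodule of $S$ generated by the coordinates $a_{1},\ldots,a_{n}$ of $\mathbf{a}$. Since $S$ is a free $R$-module of rank $m$, fixing an $R$-basis identifies $S$ with $R^{m}$, and under this identification $supp(\mathbf{a})$ becomes a submodule $N$ of $R^{m}$. Proposition \ref{LocalizationOfRank} is stated for submodules of $R^{n}$, but the same argument is valid for any exponent, so it applies to $N\subseteq R^{m}$ and gives
\[
rk_{R}\bigl(supp(\mathbf{a})\bigr) = \max_{1\le j\le\rho}\bigl\{\, rk_{R_{(j)}}\bigl(\Phi_{(j)}(supp(\mathbf{a}))\bigr)\,\bigr\}.
\]

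First I would check that the two occurrences of $\Phi_{(j)}$ here are consistent: from $S = S_{(1)}\times\cdots\times S_{(\rho)}$ and $R = R_{(1)}\times\cdots\times R_{(\rho)}$ one gets $R^{m}\cong R_{(1)}^{m}\times\cdots\times R_{(\rho)}^{m}$ and $S_{(j)}\cong R_{(j)}^{m}$ as $R_{(j)}$-modules, so the projection $\Phi_{(j)}\colon S\to S_{(j)}$ used for vectors and supports agrees, through $S\cong R^{m}$, with the restriction to $R^{m}$ of the projection appearing in Proposition \ref{LocalizationOfRank}. Next, since $\Phi_{(j)}$ is a surjective ring homomorphism and the $R$-module structure of $S_{(j)}$ is induced from its $R_{(j)}$-module structure via $R\to R_{(j)}$, the image under $\Phi_{(j)}$ of the $R$-submodule of $S$ generated by $\{a_{1},\ldots,a_{n}\}$ is exactly the $R_{(j)}$-submodule of $S_{(j)}$ generated by $\{\Phi_{(j)}(a_{1}),\ldots,\Phi_{(j)}(a_{n})\}$, i.e.
\[
\Phi_{(j)}\bigl(supp(\mathbf{a})\bigr) = supp\bigl(\Phi_{(j)}(\mathbf{a})\bigr),
\]
the latter support being taken over the chain ring $R_{(j)}$ with its Galois extension $S_{(j)}$. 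Hence $rk_{R_{(j)}}\bigl(\Phi_{(j)}(supp(\mathbf{a}))\bigr) = rk\bigl(\Phi_{(j)}(\mathbf{a})\bigr)$ by the definition of the rank of a vector in $S_{(j)}^{n}$, and substituting into the displayed identity yields the corollary.

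The computations above are routine; the only point that genuinely deserves care — and the one I would spell out — is the compatibility of the two meanings of $\Phi_{(j)}$ together with the fact that $\Phi_{(j)}$ commutes with forming the support. Both are immediate consequences of $\Phi_{(j)}$ being a surjective homomorphism respecting the product decompositions of $R$ and $S$, so I do not expect a real obstacle: the corollary is essentially a reformulation of Proposition \ref{LocalizationOfRank} in the vector language.
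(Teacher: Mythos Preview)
Your proposal is correct and follows precisely the approach the paper intends: the corollary is stated immediately after Proposition~\ref{LocalizationOfRank} with only the remark that vectors in $S^{n}$ can be viewed as $R$-submodules via their support, and you have simply spelled out the verification that $\Phi_{(j)}(supp(\mathbf{a})) = supp(\Phi_{(j)}(\mathbf{a}))$ that makes this work. The paper gives no further proof, so your write-up is in fact more detailed than the original.
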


Let us recall that an $S$-submodule $\mathcal{C}$ of $S^{n}$ is also called
a linear code of length $n$ over $S$. Its rank will be denoted by $k(%
\mathcal{C)}$ and, its minimum rank distance is $d\left( \mathcal{C}\right)
:=\min \left\{ rk\left( \mathbf{u}-\mathbf{v}\right) :\ \mathbf{u},\mathbf{v}%
\in \mathcal{C},\ \mathbf{u}\neq \mathbf{v}\right\} $. A generator matrix of
$\mathcal{C}$ is any $k(\mathcal{C)}\times n$ matrix over $S$ whose rows
generate $\mathcal{C}$. The dual of $\mathcal{C}$ denoted by $\mathcal{C}%
^{\perp }$ is the orthogonal of $\mathcal{C}$ with respect to the usual
Euclidean inner product on $S^{n}$ and, a parity-check matrix of $\mathcal{C}
$ is a generator matrix of $\mathcal{C}^{\perp }$. By \cite[Proposition 2.9]%
{Fan2014matrix}, if $\mathcal{C}$ is a free module, then $\mathcal{C}^{\perp
}$ is also a free module of rank $n-k(\mathcal{C)}$.

The minimum rank distance $d\left( \mathcal{C}\right) $ is an essential
parameter  for the code $\mathcal{C}$. It allows to evaluate the error correction capacity of $\CC$ which is given by $\lfloor \left( d\left( \mathcal{C}\right) - 1 \right) / 2 \rfloor$. The Singleton bound in rank metric is given by

\begin{equation*}
\log _{\left\vert R\right\vert }\left\vert \mathcal{C}\right\vert \leq \max
\{ m, n \} (\min \{m, n \}-d\left( \mathcal{C}\right) +1).
\end{equation*}%
Codes that achieve this bound are called Maximum Rank Distance (MRD) codes. Note that if $\mathcal{C}$ is a free $S$-submodule of $S^{n}$
then, $\log _{\left\vert R\right\vert }\left\vert \mathcal{C}\right\vert =k(%
\mathcal{C)}m$ holds. Similar to \cite[Lemmas 6.1 and 6.2]{Dougherty2009mds}, we
have the following proposition :

\begin{proposition}
\label{RankDimesionDecomposition}Consider a linear code $\mathcal{C}$ of
length $n$ over $S$ and set $\mathcal{C}_{(j)}:=\Phi _{(j)}\left( \mathcal{C}%
\right) $ for $j=1,\ldots ,\rho $. We have
\begin{equation}
k(\mathcal{C)}=\max_{1\leq j\leq \rho }\left\{ k\left( \mathcal{C}%
_{(j)}\right) \right\}  \label{RankDecoposition}
\end{equation}%
and%
\begin{equation}
d\left( \mathcal{C}\right) =\min_{1\leq j\leq \rho }\left\{ d\left( \mathcal{%
C}_{(j)}\right) \right\} .  \label{DistanceDecomposition}
\end{equation}
\end{proposition}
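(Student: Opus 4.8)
The plan is to prove the two identities \eqref{RankDecoposition} and \eqref{DistanceDecomposition} separately, exploiting the product decomposition $S \cong S_{(1)} \times \cdots \times S_{(\rho)}$ and the fact that the projections $\Phi_{(j)}$ are surjective $R$-algebra (hence $S$-module) homomorphisms. Throughout I would use the dictionary ``$S$-submodule of $S^n$'' $\longleftrightarrow$ ``linear code over $S$'', together with Proposition \ref{LocalizationOfRank} and Corollary \ref{LocalizationOfRankVector}, which already localize the rank of a module and of a vector. The main conceptual point is that, because the idempotents $e_1,\ldots,e_\rho$ decomposing $1\in S$ are central, every $S$-submodule $\mathcal{C}$ of $S^n$ splits as an internal direct sum $\mathcal{C} = \bigoplus_{j} e_j\mathcal{C}$, and $e_j\mathcal{C}$ is naturally identified with $\mathcal{C}_{(j)} = \Phi_{(j)}(\mathcal{C}) \subseteq S_{(j)}^n$ viewed back inside $S^n$; so a generating set of $\mathcal{C}$ projects to a generating set of each $\mathcal{C}_{(j)}$, and conversely, generating sets of the $\mathcal{C}_{(j)}$ can be lifted and recombined (via the $e_j$) into a generating set of $\mathcal{C}$.

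For \eqref{RankDecoposition}: First I would note that $\Phi_{(j)}$ surjective implies $k(\mathcal{C}_{(j)}) \le k(\mathcal{C})$ for every $j$ (a generating set of size $k(\mathcal{C})$ maps to a generating set of $\mathcal{C}_{(j)}$), which gives ``$\ge$''. For the reverse inequality, set $k^\ast = \max_j k(\mathcal{C}_{(j)})$, pick for each $j$ a generating set $\{\mathbf{g}^{(j)}_1,\ldots,\mathbf{g}^{(j)}_{k^\ast}\}$ of $\mathcal{C}_{(j)}$ (padding shorter ones with zeros), lift each $\mathbf{g}^{(j)}_\ell$ arbitrarily to $\widetilde{\mathbf{g}}^{(j)}_\ell \in \mathcal{C}$, and form $\mathbf{g}_\ell := \sum_{j} e_j \widetilde{\mathbf{g}}^{(j)}_\ell$ for $\ell = 1,\ldots,k^\ast$. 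Because $e_j \widetilde{\mathbf{g}}^{(j)}_\ell$ depends only on $\mathbf{g}^{(j)}_\ell$ (the idempotents kill the choice of lift modulo the other components), a short check shows $\{\mathbf{g}_1,\ldots,\mathbf{g}_{k^\ast}\}$ generates $\mathcal{C}$, so $k(\mathcal{C}) \le k^\ast$.

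For \eqref{DistanceDecomposition}: I would start from the definition $d(\mathcal{C}) = \min\{\, rk(\mathbf{c}) : \mathbf{c}\in\mathcal{C},\ \mathbf{c}\neq 0 \,\}$ (translating $\mathbf{u}-\mathbf{v}$ to a single nonzero codeword by linearity), and apply Corollary \ref{LocalizationOfRankVector}: for $\mathbf{c}\in\mathcal{C}$, $rk(\mathbf{c}) = \max_j rk(\Phi_{(j)}(\mathbf{c}))$. For ``$\ge$'': given any $j$ and any nonzero $\mathbf{c}_{(j)}\in\mathcal{C}_{(j)}$, lift it to some $\mathbf{c}\in\mathcal{C}$ with $\Phi_{(j)}(\mathbf{c}) = \mathbf{c}_{(j)}$; but to keep the rank from inflating I would instead take $\mathbf{c} = e_j\widetilde{\mathbf{c}}$ where $\widetilde{\mathbf{c}}$ is any lift, so that $\Phi_{(i)}(\mathbf{c}) = 0$ for $i\neq j$ and $\Phi_{(j)}(\mathbf{c}) = \mathbf{c}_{(j)} \neq 0$; then $rk(\mathbf{c}) = rk(\mathbf{c}_{(j)})$, showing $d(\mathcal{C}) \le d(\mathcal{C}_{(j)})$, hence $d(\mathcal{C}) \le \min_j d(\mathcal{C}_{(j)})$. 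For ``$\le$'' in the sense needed, i.e. $d(\mathcal{C}) \ge \min_j d(\mathcal{C}_{(j)})$: take any nonzero $\mathbf{c}\in\mathcal{C}$; since $\mathbf{c}\neq 0$ and $\mathbf{c} = \sum_j e_j\mathbf{c}$, some projection $\Phi_{(j_0)}(\mathbf{c})$ is nonzero and lies in $\mathcal{C}_{(j_0)}$, so $rk(\mathbf{c}) = \max_j rk(\Phi_{(j)}(\mathbf{c})) \ge rk(\Phi_{(j_0)}(\mathbf{c})) \ge d(\mathcal{C}_{(j_0)}) \ge \min_j d(\mathcal{C}_{(j)})$. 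Taking the minimum over $\mathbf{c}$ finishes it.

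The step I expect to be the main obstacle — or at least the one requiring the most care — is the lifting-and-recombining argument for the inequality $k(\mathcal{C}) \le \max_j k(\mathcal{C}_{(j)})$: one must verify cleanly that $\sum_j e_j\widetilde{\mathbf{g}}^{(j)}_\ell$ is independent of the chosen lifts and genuinely generates $\mathcal{C}$ as an $S$-module, which hinges on the centrality and orthogonality of the idempotents $e_j$ and on $e_j S \cong S_{(j)}$ as rings. Everything else reduces to surjectivity of $\Phi_{(j)}$ plus the already-established localization of rank. It is also worth remarking, as the text hints via the analogy with \cite[Lemmas 6.1 and 6.2]{Dougherty2009mds}, that this is exactly the Hamming-metric argument with ``Hamming support'' replaced by ``rank/column-space support'', so no genuinely new difficulty arises in the rank setting.
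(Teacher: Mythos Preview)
Your proof is correct, and for \eqref{DistanceDecomposition} it is essentially identical to the paper's: your idempotent $e_j$ is exactly the element $\alpha=(0,\ldots,1,\ldots,0)\in S$ the paper uses to isolate one component, and the lower bound via a nonzero projection is the same in both.

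For \eqref{RankDecoposition} there is a mild difference of economy rather than of method. You work out the two inequalities by hand via lifting and recombining along the idempotents, and you flag the recombination step as the ``main obstacle''. The paper bypasses all of this: since $S=S_{(1)}\times\cdots\times S_{(\rho)}$ is itself a finite principal ideal ring, Proposition~\ref{LocalizationOfRank} (which is stated for submodules over any such ring, coming from \cite[Corollary~2.5]{Dougherty2009mds}) applies directly with $R$ replaced by $S$, yielding $k(\mathcal{C})=rk_S(\mathcal{C})=\max_j rk_{S_{(j)}}(\Phi_{(j)}(\mathcal{C}))=\max_j k(\mathcal{C}_{(j)})$ in one line. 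So the step you anticipated as delicate is already packaged inside a result you listed among your tools; you just did not notice it applies at the level of $S$ as well as $R$.
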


\begin{proof}
Relation (\ref{RankDecoposition}) is a direct consequence of Proposition \ref%
{LocalizationOfRank}. For relation (\ref{DistanceDecomposition}), let $%
j_{0}\in \left\{ 1,\ldots ,\rho \right\} $ such that $d\left( \mathcal{C}%
_{(j_{0})}\right) =$\ $\min_{1\leq j\leq \rho }\left\{ d\left( \mathcal{C}%
_{(j)}\right) \right\} $, and $\mathbf{c} \in $ $\mathcal{C}$ such that $%
rk\left( \Phi _{(j_{0})}\left( \mathbf{c}\right) \right) =d\left( \mathcal{C}%
_{(j_{0})}\right) $. Consider $\alpha =\left( \alpha _{1},\ldots ,\alpha
_{\rho }\right) \in S$ such that $\alpha _{j_{0}}=1$ and $\alpha _{j}=0$ if $%
j\in \left\{ 1,\ldots ,\rho \right\} \backslash \left\{ j_{0}\right\} $.
Then, $\Phi _{(j_{0})}\left( \alpha \mathbf{c}\right) =\Phi _{(j_{0})}\left(
\mathbf{c}\right) $ and $\Phi _{(j)}\left( \alpha \mathbf{c}\right) =\mathbf{%
0}$ if $j\in \left\{ 1,\ldots ,\rho \right\} \backslash \left\{
j_{0}\right\} $. Therefore, by Corollary \ref{LocalizationOfRankVector}, $%
rk\left( \alpha \mathbf{c}\right) =d\left( \mathcal{C}_{(j_{0})}\right) $.
So, $d\left( \mathcal{C}\right) \leq d\left( \mathcal{C}_{(j_{0})}\right) $.

Let $\mathbf{x}\in \mathcal{C}$ such that $rk\left( \mathbf{x}\right)
=d\left( \mathcal{C}\right) $, then there is $j_{1}\in \left\{ 1,\ldots
,\rho \right\} $ such that $\Phi _{(j_{1})}\left( \mathbf{x}\right) \neq
\mathbf{0}$. Since $rk\left( \mathbf{x}\right) \geq rk\left( \Phi
_{(j_{1})}\left( \mathbf{x}\right) \right) $, we have $d\left( \mathcal{C}%
\right) \geq d\left( \mathcal{C}_{(j_{1})}\right) $ and finally, $d\left(
\mathcal{C}\right) \geq d\left( \mathcal{C}_{(j_{0})}\right) $.
\end{proof}

By Proposition \ref{RankDimesionDecomposition}, the problem of computing the
minimum rank distance of linear codes over finite principal ideal rings is
reduced to the same problem for codes over finite chain rings. In the next
section, we will use the socle and the injective envelope of modules over
finite chain rings to show that this problem reduces to finite fields.

\section{\protect\large Some Properties of Linear Codes Over Finite Chain
Rings}

\label{sec:finiteChain} In this section, we assume as in Subsection \ref%
{sec:prelims} that $R$ is a finite commutative chain ring with residue field
$\mathbb{F}_{q}$ and maximal ideal $\mathfrak{m}$ generated by $\pi$ that
has $\nu $ as its nilpotency index. Remark that $S$ is also a finite chain
ring with residue field $\mathbb{F}_{q^{m}}$. The natural projection $S
\rightarrow \mathbb{F}_{q^{m}}$ is also denoted by $\Psi $ and we extend $%
\Psi $ coefficient-by-coefficient as a map from $S^{n}$ to $\mathbb{F}%
_{q^{m}}^{n}$.

\subsection{\protect\large Socle and Injective Envelope of Modules Over
Finite Chain Rings}

Let $M$ be a finitely generated $R$-module. We recall that the socle of $M$
denoted by $soc_R \left( M\right) $ or simply $soc\left( M\right) $, is the
sum of the minimum nonzero submodules of $M$; while the injective envelope $%
E\left( M\right) $ of $M$ is the smallest injective module containing $M$.
We refer the reader to \cite{Anderson2012rings} for more details about
socles and injective envelopes.

\begin{proposition}
\label{SocleModule}For a finitely generated $R$-module $M$, we have
\begin{equation*}
soc\left( M\right) =soc\left( E\left( M\right) \right) =\pi ^{\nu -1}E\left(
M\right) .
\end{equation*}
\end{proposition}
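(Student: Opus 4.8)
The plan is to compute $soc(E(M))$ and $soc(M)$ separately and to check that both equal $\pi^{\nu-1}E(M)$. Two facts about the finite chain ring $R$ carry the argument: $R$ is self-injective (finite chain rings are quasi-Frobenius, see \cite{Mcdonald1974finite,Anderson2012rings}), and its socle is the \emph{simple} module $soc(R) = \{x \in R : \pi x = 0\} = \pi^{\nu-1}R \cong R/\mathfrak{m} = \mathbb{F}_q$, where the last identification comes from the standard description of the ideals of a chain ring (the annihilator of $\pi^{i}$ is $\pi^{\nu-i}R$).

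First I would record two elementary facts valid for every finitely generated $R$-module $N$. Since each simple $R$-module is isomorphic to $\mathbb{F}_q$, the socle is $soc(N) = \{x \in N : \pi x = 0\}$; in particular $soc(N)$ is an $\mathbb{F}_q$-vector space. And since $N$ has finite length, $soc(N)$ is an essential submodule of $N$, hence also an essential submodule of $E(N)$, so that $E(N) = E(soc(N))$. Applying this with $N = M$ and writing $soc(M) \cong \mathbb{F}_q^{\,k}$, additivity of the injective hull over finite direct sums gives $E(M) = E(\mathbb{F}_q^{\,k}) \cong E(\mathbb{F}_q)^{k} \cong R^{k}$, where the last isomorphism is exactly the self-injectivity of $R$ together with $soc(R) \cong \mathbb{F}_q$ (an injective module with simple --- hence essential --- socle is the injective hull of that socle).

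Granting $E(M) \cong R^{k}$, the middle equality is then immediate: $soc(E(M)) = soc(R)^{k} = (\pi^{\nu-1}R)^{k} = \pi^{\nu-1}R^{k} = \pi^{\nu-1}E(M)$. For the first equality, $soc(M) \subseteq soc(E(M))$ is trivial, while conversely any simple submodule $T$ of $E(M)$ meets $M$ non-trivially because $M$ is essential in $E(M)$, and simplicity of $T$ forces $T \subseteq M$, hence $T \subseteq soc(M)$; summing over all such $T$ gives $soc(E(M)) \subseteq soc(M)$. Chaining the three identities yields the claim, all of it taking place inside $E(M)$.

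The only genuinely nontrivial ingredient here is the self-injectivity of a finite chain ring and the resulting identification $E(\mathbb{F}_q) \cong R$; the rest is bookkeeping with socles, essential extensions, and the ideal lattice of $R$. If one wishes to avoid quoting self-injectivity of $R$ directly, one can instead use the structure theorem $M \cong \bigoplus_i R/\mathfrak{m}^{\lambda_i}$ with $1 \leq \lambda_i \leq \nu$ and observe that $R/\mathfrak{m}^{j} \cong \pi^{\nu-j}R$ is an essential $R$-submodule of $R$ (it contains $soc(R) = \pi^{\nu-1}R$), so that $E(R/\mathfrak{m}^{j}) \cong R$ and again $E(M) \cong R^{k}$, the rest of the argument being unchanged. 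The main point to be careful about is to verify that $soc(M)$ really is essential in $M$ and that the isomorphism $E(M) \cong R^{k}$ is one of $R$-modules extending the inclusion of $soc(M)$, so that the concluding chain of equalities is legitimate.
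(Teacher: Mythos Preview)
Your argument is correct and follows the same logical path as the paper's proof: the paper simply cites \cite{Anderson2012rings} for $soc(M)=soc(E(M))$ and \cite[Theorem~2.3]{Honold2000linear} for the freeness of $E(M)$ together with $soc(E(M))=\pi^{\nu-1}E(M)$, whereas you supply self-contained proofs of both facts (via essentiality of the socle and self-injectivity of~$R$, or alternatively via the structure theorem for modules over a chain ring). The content is the same; your version is just more explicit.
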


\begin{proof}
From \cite{Anderson2012rings}, $soc\left( M\right) =soc\left( E\left(
M\right) \right) $. By \cite[Theorem 2.3.]{Honold2000linear}, $E\left(
M\right) $ is a free module and $soc\left( E\left( M\right) \right) =\pi
^{\nu -1}E\left( M\right) $.
\end{proof}

Proposition \ref{SocleModule} provides a relation between the socle and the envelope of a module. Assume for example that $M$ is a rank $k$ submodule of a free $R$-module $V$ of rank $n$. Using the Smith normal form, one can compute a basis $\left\{ b_{1},\ldots ,b_{n}\right\} $ of $V$ and $k$ elements $d_{1},\ldots ,d_{k}$ in $R$ such that $\left\{ d_{1}b_{1},\ldots
,d_{k}b_{k}\right\} $ generates $M$ \cite[Proposition 3.2]{Kamche2019rank}.
Consequently, $E \left( M \right) $ is generated by $\left\{ b_{1},\ldots
,b_{k}\right\} $ and $soc\left( M \right) $ is generated by $\left\{ \pi
^{\nu -1}b_{1},\ldots ,\pi ^{\nu -1}b_{k}\right\} $.

The following proposition continues by showing that any linear code $%
\mathcal{C}$ over $S$ shares the same rank and the same minimum distance
with its socle and its injective envelope.

\begin{proposition}
\label{SocleEnvelepe}For a linear code $\mathcal{C}$ of length $n$ over $S$,
we have
\[ 
k(\mathcal{C})=k\left( soc\left( \mathcal{C}\right) \right) = k \left( E
\left( \mathcal{C}\right) \right) \text{ and } d\left( \mathcal{C}\right)
=d\left( soc\left( \mathcal{C}\right) \right) = d \left( E \left( \mathcal{C}%
\right) \right)
\]
\end{proposition}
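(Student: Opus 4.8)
The plan is to reduce everything to the explicit description of $soc(\mathcal{C})$ and $E(\mathcal{C})$ recorded just after Proposition~\ref{SocleModule}, applied with the finite chain ring $S$ in the role of $R$; here one uses that $S$ is itself a finite chain ring, that $\pi$ generates its maximal ideal, and that $\pi$ has the same nilpotency index $\nu$ in $S$ as in $R$ (because $R\hookrightarrow S$), so Proposition~\ref{SocleModule} applies to finitely generated $S$-modules. By the Smith normal form for submodules of a free module over a finite chain ring \cite[Proposition~3.2]{Kamche2019rank}, there are an $S$-basis $\{b_1,\dots,b_n\}$ of $S^{n}$ and nonzero elements $d_1,\dots,d_k\in S$, with $k=k(\mathcal{C})$, such that $\{d_1b_1,\dots,d_kb_k\}$ generates $\mathcal{C}$; consequently $E(\mathcal{C})=\bigoplus_{i=1}^{k}Sb_i$ and $soc(\mathcal{C})=\pi^{\nu-1}E(\mathcal{C})=\bigoplus_{i=1}^{k}(\pi^{\nu-1}S)b_i$, and $soc(\mathcal{C})\subseteq\mathcal{C}\subseteq E(\mathcal{C})$ inside $S^{n}$. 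The rank equalities are then immediate: $E(\mathcal{C})$ is $S$-free of rank $k$, while $soc(\mathcal{C})$ is annihilated by $\pi$, hence is an $\mathbb{F}_{q^{m}}$-vector space with basis $\pi^{\nu-1}b_1,\dots,\pi^{\nu-1}b_k$; since the $S$-submodules of a module killed by $\pi$ are exactly its $\mathbb{F}_{q^{m}}$-subspaces, the minimal number of $S$-generators of $soc(\mathcal{C})$ equals its dimension $k$. Thus $k(\mathcal{C})=k(soc(\mathcal{C}))=k(E(\mathcal{C}))=k$.

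For the minimum-distance part, the inclusions $soc(\mathcal{C})\subseteq\mathcal{C}\subseteq E(\mathcal{C})$ give $d(E(\mathcal{C}))\le d(\mathcal{C})\le d(soc(\mathcal{C}))$, since enlarging an $S$-submodule cannot increase its minimum weight $\min\{rk(\mathbf{x}):\mathbf{x}\neq0\}$. It therefore suffices to prove $d(soc(\mathcal{C}))\le d(E(\mathcal{C}))$. Given a nonzero $\mathbf{x}\in E(\mathcal{C})$, let $a$ be the largest integer with $\mathbf{x}\in\pi^{a}E(\mathcal{C})$; since $\mathbf{x}\neq0$ and $\pi^{\nu}=0$, we have $0\le a\le\nu-1$, and writing $\mathbf{x}=\pi^{a}\mathbf{w}$ with $\mathbf{w}\in E(\mathcal{C})\setminus\pi E(\mathcal{C})$ we obtain $\pi^{\nu-1-a}\mathbf{x}=\pi^{\nu-1}\mathbf{w}\in\pi^{\nu-1}E(\mathcal{C})=soc(\mathcal{C})$, and $\pi^{\nu-1}\mathbf{w}\neq0$ because some coordinate of $\mathbf{w}$ in the basis $\{b_1,\dots,b_k\}$ is a unit. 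Finally, the support $supp(\pi^{\nu-1-a}\mathbf{x})=\pi^{\nu-1-a}\,supp(\mathbf{x})$ is the image of $supp(\mathbf{x})$ under multiplication by $\pi^{\nu-1-a}$, hence is generated by no more $R$-elements than $supp(\mathbf{x})$, i.e.\ $rk(\pi^{\nu-1-a}\mathbf{x})\le rk(\mathbf{x})$. Therefore $d(soc(\mathcal{C}))\le rk(\mathbf{x})$ for every nonzero $\mathbf{x}\in E(\mathcal{C})$, so $d(soc(\mathcal{C}))\le d(E(\mathcal{C}))$, which together with the chain above yields $d(\mathcal{C})=d(soc(\mathcal{C}))=d(E(\mathcal{C}))$. (The case $\mathcal{C}=\{0\}$ is trivial.)

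I expect the distance direction to be where the genuine work lies, and within it the key point is the inequality $rk(\pi^{\nu-1-a}\mathbf{x})\le rk(\mathbf{x})$: this concerns the rank over the subring $R$ (not over $S$) of the $R$-module spanned by the coordinates, and it rests on the elementary fact that a surjection of finitely generated $R$-modules cannot increase the minimal number of generators. One must also be careful that the rescaled vector does land in $soc(\mathcal{C})=\pi^{\nu-1}E(\mathcal{C})$ and is nonzero, which is precisely why one first peels off the ``primitive part'' $\mathbf{w}$ of $\mathbf{x}$; everything else --- the rank equalities and the easy chain of inequalities for the distances --- is formal.
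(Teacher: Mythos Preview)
Your proof is correct and rests on the same core idea as the paper's: push a nonzero codeword into the socle by multiplying by a suitable power of $\pi$, which cannot increase the $R$-rank of its support. The paper's proof is terser and more modular: it invokes \cite[Proposition~3.1]{Greferath2004finite} for $d(\mathcal{C})=d(soc(\mathcal{C}))$, then reapplies that equality to $E(\mathcal{C})$ using $soc(\mathcal{C})=soc(E(\mathcal{C}))$ from Proposition~\ref{SocleModule}, and cites \cite[Theorem~2.3]{Honold2000linear} for the rank equalities. You instead work everything out explicitly from the Smith normal form description of $\mathcal{C}\subseteq S^{n}$, handle the ranks by hand, and organize the distance argument around the single chain $soc(\mathcal{C})\subseteq\mathcal{C}\subseteq E(\mathcal{C})$, proving only the one non-trivial inequality $d(soc(\mathcal{C}))\le d(E(\mathcal{C}))$. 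Your route is more self-contained and avoids external references; the paper's route is shorter on the page and highlights that $d(\mathcal{C})=d(soc(\mathcal{C}))$ holds as a standalone fact for any linear code, not just through the envelope.
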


\begin{proof}
The proof of the equality $d\left( \mathcal{C}\right) =d\left( soc\left(
\mathcal{C}\right) \right) $ is similar to the proof given in \cite[%
Proposition 3.1]{Greferath2004finite}. Furthermore, since $soc\left(
\mathcal{C}\right) =soc\left( E\left( \mathcal{C}\right) \right) $, we also
have $d \left( E\left( \mathcal{C}\right) \right) =d\left( soc\left( E\left(
\mathcal{C}\right) \right) \right) =d(soc\left( \mathcal{C}\right) )$ and
thanks to \cite[Theorem 2.3]{Honold2000linear}, $k(\mathcal{C})=k(soc\left(
\mathcal{C}\right) )=k(E\left( \mathcal{C}\right) \mathcal{)}$.
\end{proof}

By \cite[Lemma 9]{Kamche2021low} we also have the following :\

\begin{lemma}
\label{Independent}A subset $\left\{ b_{i}\right\} _{1\leq i\leq t}$ of $S$
is $R$-linearly independent if and only if $\left\{ \Psi \left( b_{i}\right)
\right\} _{1\leq i\leq t}$ is $\mathbb{F}_{q}$-linearly independent.
\end{lemma}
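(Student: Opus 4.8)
The plan is to prove both implications by elementary manipulations with the $\pi$-adic valuation on $R$ and $S$, using only that $R$ and $S$ are finite chain rings with $\mathfrak{M}=\pi S=\mathfrak{m}S$ and that $\pi$ has nilpotency index $\nu$ in $S$ as well: indeed $\pi^{\nu}=0$ already in $R\subseteq S$, while $\pi^{\nu-1}\cdot 1\neq 0$ since $S$ is free over $R$. The two facts I will use repeatedly are: (i) every nonzero $a\in R$ writes uniquely as $a=\pi^{e}u$ with $u\in R^{\times}$ and $0\le e<\nu$, and $\Psi(a)\neq 0$ in $\mathbb{F}_q$ exactly when $e=0$; and (ii) in the chain ring $S$ one has $\mathrm{Ann}_{S}(\pi^{\nu-1})=\pi S=\mathfrak{M}=\ker\Psi$.

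For the implication ``$\{\Psi(b_i)\}$ is $\mathbb{F}_q$-independent $\Rightarrow$ $\{b_i\}$ is $R$-independent'', I would argue by contraposition. Suppose $\sum_{i=1}^{t}a_{i}b_{i}=0$ in $S$ with the $a_i\in R$ not all zero. Writing $a_i=\pi^{e_i}u_i$ for the nonzero $a_i$, set $e=\min\{e_i:\ a_i\neq 0\}$ and multiply the relation by $\pi^{\nu-1-e}$. Every term with $e_i>e$ dies because $\pi^{\nu-1-e+e_i}=0$, and the surviving terms give $\pi^{\nu-1}\big(\sum_{i\in I}u_ib_i\big)=0$, where $I=\{i:\ e_i=e\}\neq\emptyset$. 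By (ii) this forces $\sum_{i\in I}u_ib_i\in\mathfrak{M}$, so applying $\Psi$ yields $\sum_{i\in I}\Psi(u_i)\Psi(b_i)=0$ in $\mathbb{F}_{q^{m}}$. Since the $u_i$ are units of $R$, the scalars $\Psi(u_i)$ lie in $\mathbb{F}_q^{\times}$, so this is a nontrivial $\mathbb{F}_q$-linear dependence among the $\Psi(b_i)$.

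For the converse, suppose $\sum_{i=1}^{t}\bar\lambda_i\Psi(b_i)=0$ with $\bar\lambda_i\in\mathbb{F}_q$ not all zero. Lift each $\bar\lambda_i$ to $\lambda_i\in R$, choosing $\lambda_i\in R^{\times}$ when $\bar\lambda_i\neq 0$ and $\lambda_i\in\mathfrak{m}$ otherwise. Then $\Psi\big(\sum_i\lambda_ib_i\big)=0$, hence $\sum_i\lambda_ib_i\in\mathfrak{M}=\pi S$, and multiplying by $\pi^{\nu-1}$ gives $\sum_i(\pi^{\nu-1}\lambda_i)b_i=0$, since $\pi^{\nu-1}\cdot(\pi S)=\pi^{\nu}S=0$. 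For the indices with $\bar\lambda_i=0$ we have $\lambda_i\in\pi R$, so $\pi^{\nu-1}\lambda_i\in\pi^{\nu}R=0$; for the indices with $\bar\lambda_i\neq 0$ the element $\pi^{\nu-1}\lambda_i$ equals $\pi^{\nu-1}$ times a unit, hence is nonzero in $R$ and a fortiori in $S$. Thus $\sum_i(\pi^{\nu-1}\lambda_i)b_i=0$ is an $R$-linear relation among the $b_i$ with at least one nonzero coefficient, contradicting the $R$-independence of $\{b_i\}$.

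The only point that genuinely uses the ring (rather than field) structure—and the step I expect to be the main obstacle to state cleanly—is the passage, in the first implication, from an arbitrary relation $\sum a_ib_i=0$ to one supported on the indices of common minimal $\pi$-valuation after multiplying by a suitable power of $\pi$. This is what replaces the ``clear denominators'' move available over a field, and it is exactly what forces one to track $\mathrm{Ann}_S(\pi^{\nu-1})=\mathfrak{M}$ instead of merely reducing modulo $\mathfrak{M}$. Everything else is bookkeeping with units and valuations.
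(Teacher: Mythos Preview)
Your proof is correct: both directions go through cleanly, and the key observation $\mathrm{Ann}_{S}(\pi^{\nu-1})=\pi S=\mathfrak{M}=\ker\Psi$ is exactly what makes the valuation-shifting argument work. The paper itself does not actually prove this lemma; it merely cites \cite[Lemma~9]{Kamche2021low} and states the result. So there is nothing to compare at the level of strategy: you have supplied a complete, self-contained elementary argument where the paper outsources the proof. One cosmetic simplification in your second implication: you may as well lift $\bar\lambda_i=0$ to $\lambda_i=0$ rather than to an arbitrary element of $\mathfrak{m}$, which slightly shortens the bookkeeping, but this changes nothing substantive.
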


Lemma \ref{Independent} states that, showing the $R-$linear independence of
a family of elements in $S$ is equivalent to show the ${{\mathbb{F}}}_q-$%
linear independence of its projection on the residue field. This result is
very useful as it will help to proof several other results starting from the
following lemma.

\begin{lemma}
\label{RankSocleProjection}For any $\mathbf{a}\in S^{n}$, $rk\left( \pi^{\nu
-1}\mathbf{a}\right) =rk\left( \Psi \left( \mathbf{a}\right) \right) .$
\end{lemma}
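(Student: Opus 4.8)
The plan is to transfer the computation from $S$ down to the residue field by multiplying through by $\pi^{\nu-1}$, using Lemma \ref{Independent} to keep track of linear independence. Write $\mathbf{a}=(a_{1},\dots ,a_{n})$. Recall first two elementary facts about the finite chain ring $S$ (whose maximal ideal $\mathfrak{M}=\mathfrak{m}S$ is generated by $\pi$, still with nilpotency index $\nu$ since $S$ is free over $R$): the $R$-linear map $x\mapsto \pi^{\nu-1}x$ on $S$ has kernel exactly $\mathfrak{M}=\pi S$ (writing $x=\pi^{i}u$ with $u$ a unit, $\pi^{\nu-1}x=0$ iff $i\ge 1$), which is also $\ker \Psi$; and any $R$-submodule $N$ of $\pi^{\nu-1}S$ is annihilated by $\mathfrak{m}$, hence is an $\mathbb{F}_{q}$-vector space on which the $R$-action factors through $R/\mathfrak{m}=\mathbb{F}_{q}$, so that $rk_{R}(N)=\dim_{\mathbb{F}_{q}}N$ (a minimal $R$-generating set of $N$ is precisely an $\mathbb{F}_{q}$-basis). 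In particular $supp(\pi^{\nu-1}\mathbf{a})$, generated by $\pi^{\nu-1}a_{1},\dots ,\pi^{\nu-1}a_{n}\in \pi^{\nu-1}S$, is such an $N$.

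Set $r:=rk(\Psi(\mathbf{a}))$ and, after reordering coordinates, assume $\Psi(a_{1}),\dots ,\Psi(a_{r})$ is an $\mathbb{F}_{q}$-basis of $supp(\Psi(\mathbf{a}))$. For the bound $rk(\pi^{\nu-1}\mathbf{a})\le r$, I would expand, for each $i$, $\Psi(a_{i})=\sum_{j=1}^{r}\lambda_{i,j}\Psi(a_{j})$ with $\lambda_{i,j}\in \mathbb{F}_{q}$, lift each $\lambda_{i,j}$ to some $\widetilde{\lambda}_{i,j}\in R$ (the restriction of $\Psi$ to $R$ is the surjection $R\to \mathbb{F}_{q}$), and note that $\Psi\bigl(a_{i}-\sum_{j}\widetilde{\lambda}_{i,j}a_{j}\bigr)=0$, so $a_{i}-\sum_{j}\widetilde{\lambda}_{i,j}a_{j}\in \pi S$; multiplying by $\pi^{\nu-1}$ annihilates this element, whence $\pi^{\nu-1}a_{i}=\sum_{j=1}^{r}\widetilde{\lambda}_{i,j}\,\pi^{\nu-1}a_{j}$. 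Thus $\{\pi^{\nu-1}a_{1},\dots ,\pi^{\nu-1}a_{r}\}$ generates $supp(\pi^{\nu-1}\mathbf{a})$, giving $rk(\pi^{\nu-1}\mathbf{a})\le r$.

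For the reverse inequality it suffices, by the remark above, to show $\pi^{\nu-1}a_{1},\dots ,\pi^{\nu-1}a_{r}$ are $\mathbb{F}_{q}$-linearly independent. Suppose $\sum_{j=1}^{r}c_{j}\,\pi^{\nu-1}a_{j}=0$ with $c_{j}\in \mathbb{F}_{q}$; lifting each $c_{j}$ to $\widetilde{c}_{j}\in R$ yields $\pi^{\nu-1}\bigl(\sum_{j}\widetilde{c}_{j}a_{j}\bigr)=0$, so $\sum_{j}\widetilde{c}_{j}a_{j}\in \pi S=\ker \Psi$, i.e. $\sum_{j}\Psi(\widetilde{c}_{j})\Psi(a_{j})=0$ in $\mathbb{F}_{q^{m}}$; since $\Psi(a_{1}),\dots ,\Psi(a_{r})$ are $\mathbb{F}_{q}$-independent we get $c_{j}=\Psi(\widetilde{c}_{j})=0$ for all $j$. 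Hence $rk(\pi^{\nu-1}\mathbf{a})=\dim_{\mathbb{F}_{q}}supp(\pi^{\nu-1}\mathbf{a})=r=rk(\Psi(\mathbf{a}))$. I do not expect a real obstacle here: the only points requiring a little care are that passing into $\pi^{\nu-1}S$ turns $R$-ranks into $\mathbb{F}_{q}$-dimensions (because $\pi^{\nu-1}S$ is $\mathfrak{m}$-torsion) and that scalars have to be lifted from $\mathbb{F}_{q}$ back to $R$; everything else is immediate from $\ker \Psi=\ker(x\mapsto \pi^{\nu-1}x)=\pi S$ together with Lemma \ref{Independent}. Equivalently, one can package both inequalities at once by observing that $x\mapsto \pi^{\nu-1}x$ induces an $R$-module isomorphism $S/\pi S\cong \pi^{\nu-1}S$ carrying $\Psi(a_{i})$ to $\pi^{\nu-1}a_{i}$, hence an isomorphism $supp(\Psi(\mathbf{a}))\cong supp(\pi^{\nu-1}\mathbf{a})$ of $\mathbb{F}_{q}$-spaces, and comparing dimensions.
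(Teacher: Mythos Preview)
Your argument is correct, but it proceeds along a different path from the paper's. The paper invokes the Smith-normal-form-type decomposition \cite[Proposition 3.2]{Kamche2019rank} to produce an $R$-basis $\{b_i\}_{1\le i\le m}$ of $S$ such that $supp(\mathbf{a})$ is generated by $\{\pi^{k_i}b_i\}_{1\le i\le r}$; it then simply counts how many of the $k_i$ vanish (call this number $t$), checks that $\{\pi^{\nu-1}b_i\}_{i\le t}$ minimally generates $supp(\pi^{\nu-1}\mathbf{a})$ and that $\{\Psi(b_i)\}_{i\le t}$ generates $supp(\Psi(\mathbf{a}))$, and appeals to Lemma \ref{Independent} for the independence of the latter. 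By contrast you work directly with the original coordinates $a_j$, exploit the elementary identity $\ker(\pi^{\nu-1}\cdot)=\pi S=\ker\Psi$, and use that $\pi^{\nu-1}S$ is $\mathfrak{m}$-torsion to turn $R$-rank into $\mathbb{F}_q$-dimension; the rest is lifting scalars back and forth. Your route is more self-contained (no need for the external structure theorem) and your closing observation---that $x\mapsto\pi^{\nu-1}x$ induces an $\mathbb{F}_q$-linear isomorphism $S/\pi S\cong\pi^{\nu-1}S$ carrying $supp(\Psi(\mathbf{a}))$ onto $supp(\pi^{\nu-1}\mathbf{a})$---is essentially the content of Remark \ref{isometry}, which the paper states right after this lemma. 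One small stylistic note: you announce Lemma \ref{Independent} in your plan but your actual argument never invokes it; the independence step is handled directly via $\ker\Psi=\pi S$, which is fine.
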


\begin{proof}
By \cite[Proposition 3.2]{Kamche2019rank}, there exist a basis $\left\{
b_{i}\right\} _{1\leq i\leq m}$ of $S$ and $r=rk\left( \mathbf{a}\right) $
integers $k_{1},k_{2},\ldots ,k_{r} \in {{\mathbb{N}}}$ such that $\{\pi
^{k_{i}}b_{i}\}_{1\leq i\leq r}$ generates $supp(\mathbf{a)}$ with $%
k_{1}\leq k_{2}\leq \cdots \leq k_{r}$. If $k_{1}\neq 0$, then $\pi ^{\nu -1}%
\mathbf{a}=\mathbf{0}$ and $\Psi \left( \mathbf{a}\right) =\mathbf{0}$.
Assume $k_{1}=0$ and let $t$ be the maximum integer in $\left\{ 1,\ldots
,r\right\} $ such that $k_{t}=0$. Then, $\{\pi ^{\nu -1}b_{i}\}_{1\leq i\leq
t}$ is a minimal generating family of $supp(\pi ^{\nu -1}\mathbf{a)}$, that
is to say $rk\left( \pi ^{\nu -1}\mathbf{a}\right) =t$. Moreover $\{\Psi
\left( b_{i}\right) \}_{1\leq i\leq t}$ is a generating family of $%
supp\left( \Psi \left( \mathbf{a}\right) \right) $ and since $\{ b_{i}
\}_{1\leq i\leq t}$ is $R-$linearly independent, thanks to Lemma \ref%
{Independent}, $rk\left( \Psi \left( \mathbf{a}\right) \right) =t$.
\end{proof}

\begin{remark}
\label{isometry} Considering $S$ as an $R-$module, one can remark that the
socle of $S$ is given by $soc_{R}(S)= \pi ^{\nu -1} S$. Furthermore, the map
$\phi :soc_{R}(S)\longrightarrow S/\mathfrak{m}S$ given by $\phi ( \pi ^{\nu
-1}u )= u+\mathfrak{m}S$ is an isomorphism of $R/\mathfrak{m}-$vector spaces
and, extending $\phi $ coefficient-by-coefficient from $\pi ^{\nu -1}S^{n}$
to $\mathbb{F}_{q^{m}}^{n}$ provides an isometry between the normed spaces $%
\left( \pi ^{\nu -1}S^{n},rk_{R}\right) $ and $\left(\mathbb{F}%
_{q^{m}}^{n},rk_{\mathbb{F}_{q}}\right) $ thanks to Lemma \ref%
{RankSocleProjection}.
\end{remark}

The following theorem is a rank metric version of \cite[Theorem 3.4]%
{Walker1999algebraic}.

\begin{theorem}
\label{ProjectionCode} Given a linear code $\mathcal{C}$ of length $n$ over $%
S$ such that $\Psi \left( \mathcal{C}\right) \neq \left\{ \mathbf{0}\right\}
$,

(i) $d\left( \mathcal{C}\right) \leq d\left( \Psi \left( \mathcal{C}\right)
\right).$

(ii) if $\mathcal{C}$ is free, then $d\left( \mathcal{C}\right) =d\left(
\Psi \left( \mathcal{C}\right) \right) $.
\end{theorem}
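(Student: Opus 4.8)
The plan is to prove the two statements by establishing the inequalities $d(\mathcal{C})\le d(\Psi(\mathcal{C}))$ and, under the freeness hypothesis, $d(\Psi(\mathcal{C}))\le d(\mathcal{C})$, using Lemma~\ref{RankSocleProjection}, Proposition~\ref{SocleEnvelepe} and the isometry $\phi$ of Remark~\ref{isometry}. Note first that $\pi$ has nilpotency index $\nu$ in $S$ as well: since $S$ is a free $R$-module, $R$ embeds in $S$, so $\pi^{\nu-1}\neq 0$ in $S$ while $\pi^{\nu}=0$.

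For (i), I would lift a shortest nonzero codeword of the projected code. Choose $\bar{\mathbf c}\in\Psi(\mathcal{C})$ with $\bar{\mathbf c}\neq\mathbf 0$ and $rk(\bar{\mathbf c})=d(\Psi(\mathcal{C}))$ (this uses the hypothesis $\Psi(\mathcal{C})\neq\{\mathbf 0\}$), and pick any $\mathbf c\in\mathcal{C}$ with $\Psi(\mathbf c)=\bar{\mathbf c}$. Since $\mathcal{C}$ is an $S$-submodule, $\pi^{\nu-1}\mathbf c\in\mathcal{C}$. As $\bar{\mathbf c}\neq\mathbf 0$, some coordinate of $\mathbf c$ lies outside $\mathfrak{M}$, hence is a unit of $S$, so that coordinate times $\pi^{\nu-1}$ is nonzero and therefore $\pi^{\nu-1}\mathbf c\neq\mathbf 0$. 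By Lemma~\ref{RankSocleProjection}, $rk(\pi^{\nu-1}\mathbf c)=rk(\Psi(\mathbf c))=rk(\bar{\mathbf c})=d(\Psi(\mathcal{C}))$; since $\pi^{\nu-1}\mathbf c$ is a nonzero word of $\mathcal{C}$, this gives $d(\mathcal{C})\le d(\Psi(\mathcal{C}))$.

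For (ii) it remains to show $d(\Psi(\mathcal{C}))\le d(\mathcal{C})$, and here I would pass through the socle. The key claim is that for a free code $soc(\mathcal{C})=\pi^{\nu-1}\mathcal{C}$. The inclusion $\pi^{\nu-1}\mathcal{C}\subseteq soc(\mathcal{C})$ is immediate since $\pi\cdot\pi^{\nu-1}\mathbf c=\mathbf 0$. Conversely, fix a basis $\mathbf g_1,\dots,\mathbf g_k$ of $\mathcal{C}$ and write an element of $soc(\mathcal{C})$ as $\sum_i x_i\mathbf g_i$; the condition $\pi\sum_i x_i\mathbf g_i=\mathbf 0$ together with the $S$-linear independence of the $\mathbf g_i$ forces $\pi x_i=0$, i.e. $x_i\in\pi^{\nu-1}S$, so the element lies in $\pi^{\nu-1}\mathcal{C}$. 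Since $soc(\mathcal{C})\subseteq\pi^{\nu-1}S^{n}$ and $\phi(\pi^{\nu-1}\mathbf c)=\Psi(\mathbf c)$ coordinatewise, we get $\phi(soc(\mathcal{C}))=\phi(\pi^{\nu-1}\mathcal{C})=\Psi(\mathcal{C})$; as $\phi$ is an isometry (Remark~\ref{isometry}), $d(\Psi(\mathcal{C}))=d(soc(\mathcal{C}))$. Combining with $d(soc(\mathcal{C}))=d(\mathcal{C})$ from Proposition~\ref{SocleEnvelepe} yields $d(\mathcal{C})=d(\Psi(\mathcal{C}))$.

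The step I expect to be the real obstacle is the identity $soc(\mathcal{C})=\pi^{\nu-1}\mathcal{C}$ used in (ii): it genuinely requires freeness and is false for general submodules — for example $\mathcal{C}=\pi^{\nu-1}S\subseteq S$ has $soc(\mathcal{C})=\mathcal{C}$ but $\pi^{\nu-1}\mathcal{C}=\pi^{2(\nu-1)}S=\{\mathbf 0\}$ whenever $\nu\ge 2$ — which is exactly why (ii) is restricted to free codes. The remaining points are routine: one only has to be careful that $\phi$ is simultaneously additive and rank-preserving so that it transports minimum distances (already contained in Remark~\ref{isometry}), and that the lift $\mathbf c$ in (i) indeed survives multiplication by $\pi^{\nu-1}$.
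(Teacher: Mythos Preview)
Your proof is correct and follows essentially the same route as the paper: part (i) is identical (lift a minimum-rank word of $\Psi(\mathcal{C})$, multiply by $\pi^{\nu-1}$, apply Lemma~\ref{RankSocleProjection}), and part (ii) hinges on the same identity $soc(\mathcal{C})=\pi^{\nu-1}\mathcal{C}$ together with Proposition~\ref{SocleEnvelepe}. The only cosmetic differences are that the paper obtains $soc(\mathcal{C})=\pi^{\nu-1}\mathcal{C}$ by citing Proposition~\ref{SocleModule} (a free module is its own injective envelope) rather than your direct basis argument, and finishes (ii) by applying Lemma~\ref{RankSocleProjection} to a single minimum-rank element of $soc(\mathcal{C})$ instead of invoking the isometry $\phi$ of Remark~\ref{isometry} globally.
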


\begin{proof}
(i) Let $\mathbf{a} \in \mathcal{C}$ such that $rk\left( \Psi \left( \mathbf{%
a} \right) \right) = d \left( \Psi \left( \mathcal{C}\right) \right) $. By
Lemma \ref{RankSocleProjection}, we have $rk\left( \pi ^{\nu -1}\mathbf{a}%
\right) =d\left( \Psi \left( \mathcal{C}\right) \right) $ and since $\pi
^{\nu -1}\mathbf{a}\in \mathcal{C}$, $d\left( \mathcal{C}\right) \leq
d\left( \Psi \left( \mathcal{C}\right) \right) $ holds.

(ii) Assume that $\mathcal{C}$ is free. According to Propositions \ref%
{SocleModule} and \ref{SocleEnvelepe} respectively, $soc\left( \mathcal{C}%
\right) =\pi ^{\nu -1}\mathcal{C}$ and $d\left( \mathcal{C}\right) =d\left(
soc\left( \mathcal{C}\right) \right) $ hold. Hence, there exists $\mathbf{a}%
\in \mathcal{C}$ such that $rk\left( \pi ^{\nu -1}\mathbf{a}\right) =d\left(
\mathcal{C}\right) $ and thanks to Lemma \ref{RankSocleProjection}, $%
rk\left( \Psi \left( \mathbf{a}\right) \right) =d\left( \mathcal{C}\right) $
holds. Therefore, $d\left( \Psi \left( \mathcal{C}\right) \right) \leq
d\left( \mathcal{C}\right) $.
\end{proof}

A direct consequence of Theorem \ref{ProjectionCode} is the following:

\begin{corollary}
\label{ListCode}Let $\mathcal{C}$ be a linear rank metric code of length $n$
over $\mathbb{F}_{q^{m}}$, with rank $k$, minimum rank distance $d$ and
generated by $\mathbf{g}_{1},\ldots ,\mathbf{g}_{k}$. For each $j$ in $\left\{ 1,\ldots ,k\right\} $, let $\mathbf{g}%
_{j}^{\prime }$ in $S^{n}$ such that $\Psi \left( \mathbf{g}_{j}^{\prime
}\right) =\mathbf{g}_{j}$  and $%
\mathcal{C}^{\prime }$ be the linear code generated by $\mathbf{g}_{1}^{\prime
},\ldots ,\mathbf{g}_{k}^{\prime }$. Then, $\mathcal{C}^{\prime }$ is a free
linear rank metric code over $S$ of length $n$, rank $k$, and minimum rank
distance $d$. 
\end{corollary}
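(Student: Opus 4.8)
The plan is to first show that $\mathcal{C}^{\prime }$ is a free $S$-module of rank $k$, and then to read off the minimum distance directly from Theorem \ref{ProjectionCode}(ii). We may assume $k\geq 1$, the case $k=0$ being trivial.

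\emph{Freeness and rank.} Since $\mathbf{g}_{1},\ldots ,\mathbf{g}_{k}$ generate the $\mathbb{F}_{q^{m}}$-vector space $\mathcal{C}$, which has dimension $k$, they form an $\mathbb{F}_{q^{m}}$-basis of $\mathcal{C}$; in particular the matrix $\boldsymbol{G}\in \mathbb{F}_{q^{m}}^{k\times n}$ whose rows are the $\mathbf{g}_{i}$ has rank $k$, hence admits an invertible $k\times k$ submatrix, say the one indexed by a set $J$ of $k$ columns. Let $\boldsymbol{G}^{\prime }\in S^{k\times n}$ be the matrix whose rows are the $\mathbf{g}_{i}^{\prime }$, so that $\Psi (\boldsymbol{G}^{\prime })=\boldsymbol{G}$, and let $\boldsymbol{G}_{J}^{\prime }$ be its $J$-submatrix. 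Then $\Psi (\det \boldsymbol{G}_{J}^{\prime })=\det \boldsymbol{G}_{J}\neq 0$ in $\mathbb{F}_{q^{m}}$, so $\det \boldsymbol{G}_{J}^{\prime }\notin \mathfrak{M}$ and is therefore a unit of the local ring $S$. Thus $\boldsymbol{G}_{J}^{\prime }$ is invertible over $S$, which forces the rows of $\boldsymbol{G}^{\prime }$ to be $S$-linearly independent; consequently $\mathcal{C}^{\prime }$ is a free $S$-module with basis $\mathbf{g}_{1}^{\prime },\ldots ,\mathbf{g}_{k}^{\prime }$, so $k(\mathcal{C}^{\prime })=k$. (One could alternatively derive this $S$-linear independence from a Nakayama-type argument or from Lemma \ref{Independent} after a suitable change of basis, but the unit-determinant observation is the shortest route.)

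\emph{Minimum distance.} Because $\Psi \colon S\to \mathbb{F}_{q^{m}}$ is a surjective ring homomorphism extended coefficientwise, the image $\Psi (\mathcal{C}^{\prime })$ is exactly the $\mathbb{F}_{q^{m}}$-span of the vectors $\Psi (\mathbf{g}_{i}^{\prime })=\mathbf{g}_{i}$, that is, $\Psi (\mathcal{C}^{\prime })=\mathcal{C}\neq \{\mathbf{0}\}$. Since $\mathcal{C}^{\prime }$ is free, Theorem \ref{ProjectionCode}(ii) gives $d(\mathcal{C}^{\prime })=d(\Psi (\mathcal{C}^{\prime }))=d(\mathcal{C})=d$, which completes the argument. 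The only step requiring any real thought is the freeness claim; once it is in place, the rank and the minimum distance are one-line consequences of Theorem \ref{ProjectionCode}(ii), so I do not anticipate a genuine obstacle here.
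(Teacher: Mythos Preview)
Your proof is correct and follows essentially the same two-step skeleton as the paper: establish that $\mathcal{C}'$ is free of rank $k$, then invoke Theorem~\ref{ProjectionCode}(ii) to read off $d(\mathcal{C}')=d$. The only cosmetic difference is in the freeness step: the paper simply cites Lemma~\ref{Independent} (applied with $S$ playing the role of the chain ring and $\mathbb{F}_{q^m}$ its residue field), whereas you give an explicit unit-determinant argument via an invertible $k\times k$ submatrix of a lifted generator matrix; you yourself note that Lemma~\ref{Independent} would work as well, so the two arguments are interchangeable.
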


\begin{proof}
By Lemma \ref{Independent}, $\mathcal{C}^{\prime }$ is a free code of rank $%
k $. So, by Theorem \ref{ProjectionCode}, the minimum rank distance of $%
\mathcal{C}^{\prime }$ is $d$.
\end{proof}

Corollary \ref{ListCode} shows that the problem of computing the minimum distance for a linear rank metric code over finite chain rings is at least as hard as the same problem for rank metric codes over finite fields. Note that the latter is considered as being NP-hard \cite{GZ16}. Another consequence of Corollary \ref{ListCode} is that one can construct MRD codes over $S$ from MRD codes over $\mathbb{F}_{q^{m}}$. A kind of converse of Corollary \ref{ListCode} is given in the following corollary :

\begin{corollary}
\label{MinimumDistance} Let $\mathcal{C}$ be a linear code of length $n$
over $S$. Then $\Psi \left( E\left( \mathcal{C}\right) \right) $ and $%
\mathcal{C}$ have the same rank and the same minimum rank distance.
\end{corollary}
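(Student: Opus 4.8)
The plan is to apply the machinery built up in Proposition \ref{SocleModule}, Proposition \ref{SocleEnvelepe}, and Theorem \ref{ProjectionCode} to the free code $E(\mathcal{C})$, and then pass through $\Psi$. First I would record that, by Proposition \ref{SocleEnvelepe}, $E(\mathcal{C})$ has the same rank $k(\mathcal{C})$ and the same minimum rank distance $d(\mathcal{C})$ as $\mathcal{C}$ itself; so it suffices to prove that $\Psi(E(\mathcal{C}))$ has the same rank and minimum rank distance as $E(\mathcal{C})$. The point of replacing $\mathcal{C}$ by $E(\mathcal{C})$ is precisely that, by \cite[Theorem 2.3]{Honold2000linear}, $E(\mathcal{C})$ is a \emph{free} $S$-module, which is exactly the hypothesis needed to invoke part (ii) of Theorem \ref{ProjectionCode}.

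Next I would verify the hypothesis $\Psi(E(\mathcal{C})) \neq \{\mathbf{0}\}$ of Theorem \ref{ProjectionCode}: this holds as soon as $E(\mathcal{C}) \neq \{\mathbf{0}\}$, since a nonzero free module over the chain ring $S$ cannot be killed by $\Psi$ (equivalently, by Nakayama, or directly from Lemma \ref{Independent} applied to a basis). If $\mathcal{C} = \{\mathbf{0}\}$ the statement is trivial, so assume $\mathcal{C} \neq \{\mathbf{0}\}$, hence $E(\mathcal{C}) \neq \{\mathbf{0}\}$. Now apply Theorem \ref{ProjectionCode}(ii) with the free code $E(\mathcal{C})$ in the role of $\mathcal{C}$: this yields $d\left(\Psi(E(\mathcal{C}))\right) = d\left(E(\mathcal{C})\right) = d(\mathcal{C})$. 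For the rank, observe that $\Psi(E(\mathcal{C}))$ is an $\mathbb{F}_{q^m}$-linear code generated by the images under $\Psi$ of a basis of $E(\mathcal{C})$, and by Lemma \ref{Independent} those images are $\mathbb{F}_{q^m}$-linearly independent (here one uses $S$ as module over itself, with residue field $\mathbb{F}_{q^m}$); hence $\dim_{\mathbb{F}_{q^m}} \Psi(E(\mathcal{C})) = k(E(\mathcal{C})) = k(\mathcal{C})$, i.e. the two codes have the same rank. Combining the two equalities gives the claim.

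I do not expect a genuine obstacle here: the corollary is essentially a repackaging of Theorem \ref{ProjectionCode}(ii) once one notices that $E(\mathcal{C})$ is always free and rank/distance-preserving over $\mathcal{C}$. The only point requiring a little care is the rank (dimension) bookkeeping — making sure that ``rank'' on the $S$-side and ``dimension'' on the $\mathbb{F}_{q^m}$-side match up through $\Psi$ — which is handled cleanly by Lemma \ref{Independent}, and the trivial-code edge case, which must be excluded before invoking Theorem \ref{ProjectionCode}. So the write-up will be short: reduce to $E(\mathcal{C})$ via Proposition \ref{SocleEnvelepe}, note freeness via \cite[Theorem 2.3]{Honold2000linear}, dispatch the rank with Lemma \ref{Independent}, and the distance with Theorem \ref{ProjectionCode}(ii).
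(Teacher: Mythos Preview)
Your proposal is correct and follows essentially the same route as the paper: reduce to $E(\mathcal{C})$ via Proposition~\ref{SocleEnvelepe}, use freeness of $E(\mathcal{C})$ to invoke Theorem~\ref{ProjectionCode}(ii) for the distance, and use Lemma~\ref{Independent} for the rank. Your write-up is in fact slightly more careful than the paper's, which does not explicitly check the nonzero hypothesis or spell out why $E(\mathcal{C})$ is free before applying Theorem~\ref{ProjectionCode}(ii).
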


\begin{proof}
By Proposition \ref{SocleEnvelepe}, $d\left( \mathcal{C}\right) =d\left(
E\left( \mathcal{C}\right) \right) $ and $k\left( \mathcal{C}\right)
=k(E\left( \mathcal{C}\right) \mathcal{)}$ hold. Furthermore, we have $%
d\left( E\left( \mathcal{C}\right) \right) =d\left( \Psi \left( E\left(
\mathcal{C}\right) \right) \right) $ according to Theorem \ref%
{ProjectionCode} and, thanks to Lemma \ref{Independent}, $k\left( E\left(
\mathcal{C}\right) \right) =k\left( \Psi \left( E\left( \mathcal{C}\right)
\right) \right) $ holds and we have the result.
\end{proof}

Thanks to Corollaries \ref{MinimumDistance} and \ref{ListCode}, the problem of computing the
minimum rank distance for linear codes over finite chain rings is equivalent to
the same problem for linear codes over finite fields. Nevertheless, from an algorithmic point of view, this problem over finite rings can be worse in practice since the size of the alphabet is
naturally bigger than its projection (which is the residue field).

\begin{example}
\label{LinearCode}
Consider the rings $R=\mathbb{Z}_{8}$ and $S=R\left[ X\right] /\left( h\right) $ defined in Example \ref%
{ExampleGaloisExtension}. For $a=X+\left( h\right) $ and $\overline{a}=\Psi
\left( a\right) $, $S=R\left[ a\right] $, let $\mathbf{g}%
_{1}=(1,0,6a^{3}+5a^{2}+5,7a^{3}+5a^{2}+a+4)$, $\mathbf{g}%
_{2}=(0,1,5a^{3}+5a^{2}+2a+1,5a^{3}+2a^{2}+4a)$ and $\CC=\left\langle \mathbf{g}_{1},2\mathbf{g}_{2}\right\rangle$, that is to say the linear
code generated by $\mathbf{g}_{1}$ and $2\mathbf{g}_{2}$. 
Then, $\CC$ is of length $4$ and rank $2$. We have $%
soc\left( \CC\right) =\left\langle 4\mathbf{g}_{1},4\mathbf{g}%
_{2}\right\rangle $, $E\left( \CC\right) =\left\langle \mathbf{g}%
_{1},\mathbf{g}_{2}\right\rangle $ and $\Psi \left( E\left( \CC%
\right) \right) =\left\langle \Psi \left( \mathbf{g}_{1}\right) ,\Psi \left(
\mathbf{g}_{2}\right) \right\rangle $, with $\Psi \left( \mathbf{g}%
_{1}\right) =(1,0,\overline{a}^{2}+1,\overline{a}^{3}+\overline{a}^{2}+%
\overline{a})$ and $\Psi \left( \mathbf{g}_{2}\right) =(0,1,\overline{a}^{3}+%
\overline{a}^{2}+1,\overline{a}^{3})$. By \cite{Honold2000linear}, $\mathcal{%
C}$ has $2^{20}$ codewords while $\Psi \left( E\left( \CC\right)
\right) $ has only $2^{8}$ codewords. So, it is algorithmically better to compute the minimum
rank distance of $\CC$ via $\Psi \left( E\left( \CC\right) \right) $. Using SageMath \cite{sagemath2022}, we compute the minimum
distance of $\Psi \left( E\left( \CC\right) \right) $ and get $3$.
Thus, by Corollary \ref{SocleEnvelepe}, the minimum rank distance of $%
\CC$ is $3$.
\end{example}

\subsection{Shapes for Modules Over Finite Chain Rings}

A partition of a positive integer $n$ is a decreasing sequence of positive
integers whose sum is $n$. For $\lambda =\left( \lambda _{1},\lambda
_{2},\ldots ,\lambda _{k},0,\ldots \right) $ we will use the notation $%
\lambda \vdash n$ to say that $\lambda$ is a partition of $n$ and, will only
keep the non-zero components of $\lambda$, that is to say $\lambda =\left(
\lambda _{1},\lambda_{2},\ldots ,\lambda _{k}\right) $. The conjugate of a
partition $\lambda $ is the partition $\lambda ^{\prime }$ defined by $%
\lambda _{i}^{\prime }=\left\vert \left\{ j:\lambda _{j}\geq i\right\}
\right\vert $. By \cite[Theorem 2.2.]{Honold2000linear}, we have the
following proposition :

\begin{proposition}
\label{ShapeOfModule}Let $M$ be a finitely generated $R-$module. Then, there
exists a uniquely determined partition $\lambda =\left( \lambda _{1},\lambda
_{2},\ldots ,\lambda _{r}\right) \vdash \log _{q}\left\vert M\right\vert $,
with $\nu $ $\geq \lambda _{1}$ and $\lambda _{r}\neq 0$ such that%
\begin{equation*}
M \cong R/\mathfrak{m}^{\lambda _{1}}\times \cdots \times R/\mathfrak{m}%
^{\lambda _{r}}.
\end{equation*}%
Moreover, $rk\left( M\right) =\lambda _{1}^{\prime }=r$.
\end{proposition}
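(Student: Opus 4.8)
The plan is to derive this from the classical structure theory of finitely generated modules over a principal ideal ring, specialised to the local (chain) case; an alternative is to simply invoke \cite[Theorem 2.2.]{Honold2000linear}, but I describe a self-contained route. First I would fix a finite presentation $R^{s}\to R^{t}\to M\to 0$ of $M$ (possible since $M$ is finitely generated and $R$, being finite, is Noetherian), and let $\boldsymbol{A}\in R^{t\times s}$ be the matrix of the first map. Since $R$ is a principal ideal ring, $\boldsymbol{A}$ has a Smith normal form: there are invertible matrices $\boldsymbol{P},\boldsymbol{Q}$ over $R$ with $\boldsymbol{P}\boldsymbol{A}\boldsymbol{Q}=\mathrm{diag}(d_{1},\ldots,d_{\min(s,t)})$ and $d_{1}\mid d_{2}\mid\cdots$ (see also \cite[Proposition 3.2]{Kamche2019rank}). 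This yields $M\cong\bigoplus_{i}R/(d_{i})$, padding with copies of $R/(0)=R$ when $t>s$.

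Next I would normalise the cyclic summands. Because $R$ is a chain ring with maximal ideal $\mathfrak{m}=(\pi)$ and $\pi$ nilpotent of index $\nu$, every element of $R$ has the form $\pi^{a}u$ with $u$ a unit and $a\in\{0,1,\ldots,\nu\}$, and $(\pi^{a})=\mathfrak{m}^{a}$; in particular $(\pi^{\nu})=(0)$, so $R/(\pi^{\nu})=R$. Hence each $R/(d_{i})\cong R/\mathfrak{m}^{a_{i}}$ with $0\le a_{i}\le\nu$. Discarding the indices with $a_{i}=0$ (for which $R/\mathfrak{m}^{0}=\{0\}$) and sorting the remaining exponents in decreasing order produces a partition $\lambda=(\lambda_{1}\ge\cdots\ge\lambda_{r})$ with $\lambda_{r}\ne 0$, $\lambda_{1}\le\nu$, and $M\cong R/\mathfrak{m}^{\lambda_{1}}\times\cdots\times R/\mathfrak{m}^{\lambda_{r}}$ (the case $M=\{0\}$ giving the empty partition, $r=0$). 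For $\lambda\vdash\log_{q}|M|$, I would use that the chain $R\supset\mathfrak{m}\supset\cdots\supset\mathfrak{m}^{\nu}=0$ has all successive quotients $\mathfrak{m}^{i}/\mathfrak{m}^{i+1}$ one-dimensional over $\mathbb{F}_{q}$, whence $|R/\mathfrak{m}^{j}|=q^{j}$ for $0\le j\le\nu$; multiplying over the factors gives $|M|=q^{\lambda_{1}+\cdots+\lambda_{r}}$, i.e. $\log_{q}|M|=|\lambda|$.

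For uniqueness — which I expect to be the crux — I would exhibit $\lambda$, equivalently its conjugate $\lambda'$, as a genuine isomorphism invariant of $M$. Concretely, for each $i\ge 0$ the $R/\mathfrak{m}=\mathbb{F}_{q}$-vector space $\mathfrak{m}^{i}M/\mathfrak{m}^{i+1}M$ has dimension $\sum_{j}\dim_{\mathbb{F}_{q}}\!\bigl(\mathfrak{m}^{i}(R/\mathfrak{m}^{\lambda_{j}})/\mathfrak{m}^{i+1}(R/\mathfrak{m}^{\lambda_{j}})\bigr)=|\{\,j:\lambda_{j}\ge i+1\,\}|=\lambda'_{i+1}$, since each summand is $\mathbb{F}_{q}$ when $i<\lambda_{j}$ and $0$ otherwise. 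As the left-hand side depends only on $M$, the sequence $(\lambda'_{1},\lambda'_{2},\ldots)$, hence $\lambda$ itself, is uniquely determined by $M$; this also re-confirms $\lambda_{1}\le\nu$, because $\mathfrak{m}^{\nu}M=0$.

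Finally, for the rank statement: by Nakayama's lemma a finite subset of $M$ generates $M$ over $R$ if and only if its image generates the $\mathbb{F}_{q}$-vector space $M/\mathfrak{m}M$, so $rk(M)=\dim_{\mathbb{F}_{q}}(M/\mathfrak{m}M)$. Taking $i=0$ in the computation above gives $\dim_{\mathbb{F}_{q}}(M/\mathfrak{m}M)=\lambda'_{1}=|\{\,j:\lambda_{j}\ge 1\,\}|=r$, since every part of $\lambda$ is at least $1$. Thus $rk(M)=\lambda'_{1}=r$. The only genuinely delicate point is the uniqueness/invariance argument of the previous paragraph; the existence of the decomposition and the cardinality and rank computations are routine bookkeeping once the Smith normal form is available.
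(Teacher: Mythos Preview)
Your proposal is correct. The paper itself does not give a proof of this proposition at all: it simply states it as a consequence of \cite[Theorem 2.2]{Honold2000linear}, which you yourself mention as an alternative. So your self-contained argument via the Smith normal form, the normalisation of cyclic factors using $\mathfrak{m}=(\pi)$, the uniqueness via the invariants $\dim_{\mathbb{F}_q}(\mathfrak{m}^{i}M/\mathfrak{m}^{i+1}M)=\lambda'_{i+1}$, and the rank computation via Nakayama's lemma goes well beyond what the paper provides. Each step is sound; in particular, your uniqueness argument and your identification $rk(M)=\dim_{\mathbb{F}_q}(M/\mathfrak{m}M)=\lambda'_{1}=r$ are exactly the right way to pin down the partition and the rank intrinsically.
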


\begin{definition}[Shape of a module]
The partition $\lambda $ defined in Proposition \ref{ShapeOfModule} is
called the shape of $M$.
\end{definition}

\begin{example}[Shape of a free module]
\label{ShapeFreeModules}Let $\gamma $ be the shape of a free $R-$module of
rank $n$. Then
\begin{equation*}
\gamma =\underset{n}{(\underbrace{\nu ,\ldots ,\nu }})\ \ \ \text{and\ }\ \
\gamma ^{\prime }=\underset{\nu }{(\underbrace{n,\ldots ,n}}).
\end{equation*}
\end{example}

One can remark that the shape of an $R-$module is very related to its cardinality and its rank. The importance of introducing this notion here also comes from the fact that we use it to give the number of submodules of a given module over a finite chain ring. Recall that the number of subspaces of dimension $k$ in a vector space of dimension $n$ over a finite field with $q$ elements is given by the Gaussian binomial
coefficient: 

\begin{equation*}
\left[
\begin{array}{c}
n \\
k%
\end{array}%
\right] _{q}:=\prod\limits_{i=0}^{k-1}\frac{q^{n}-q^{i}}{q^{k}-q^{i}}.
\end{equation*}%
Note that from \cite{Koetter2008coding}, we have
\begin{equation}
q^{k\left( n-k\right) }\leq \left[
\begin{array}{c}
n \\
k%
\end{array}%
\right] _{q}\leq 4q^{k\left( n-k\right) }.  \label{GaussianBinomialBound}
\end{equation}

When dealing with modules over finite chain rings, it is also possible to
count the number of submodules of fixed rank. Thanks to \cite[Theorem 2.4.]%
{Honold2000linear}, we have the following proposition.

\begin{proposition}
\label{CountingSubmodule} Let $M$ be a finitely generated $R-$module of
shape $\lambda $. Let $\mu $ be a partition satisfying $\mu \leq \lambda $,
that is to say $\mu _{j}\leq \lambda _{j}$ for all $j$. The number of
submodules of $M$ of shape $\mu $ is%
\begin{equation*}
\prod\limits_{i=1}^{\nu }q^{\mu _{i+1}^{\prime }\left( \lambda _{i}^{\prime
}-\mu _{i}^{\prime }\right) }\left[
\begin{array}{c}
\lambda _{i}^{\prime }-\mu _{i+1}^{\prime } \\
\mu _{i}^{\prime }-\mu _{i+1}^{\prime }%
\end{array}%
\right] _{q}.
\end{equation*}
\end{proposition}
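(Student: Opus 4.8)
The plan is to prove Proposition \ref{CountingSubmodule} by reducing to the cited result \cite[Theorem 2.4]{Honold2000linear}, while supplying the details that make the statement in our notation precise. The statement asserts that for a finitely generated $R$-module $M$ of shape $\lambda$ and a partition $\mu \leq \lambda$ (componentwise), the number of submodules of $M$ of shape $\mu$ equals $\prod_{i=1}^{\nu} q^{\mu_{i+1}^{\prime}(\lambda_i^{\prime} - \mu_i^{\prime})} \left[\begin{array}{c}\lambda_i^{\prime} - \mu_{i+1}^{\prime} \\ \mu_i^{\prime} - \mu_{i+1}^{\prime}\end{array}\right]_q$. First I would recall from Proposition \ref{ShapeOfModule} that $M \cong R/\mathfrak{m}^{\lambda_1} \times \cdots \times R/\mathfrak{m}^{\lambda_r}$, so that the problem is intrinsic to the isomorphism type of $M$ and depends only on the partition $\lambda$ and on the residue field size $q$ (and the ambient nilpotency index $\nu$, which only enters as the range of the product and satisfies $\nu \geq \lambda_1$ so that terms with $i > \lambda_1$ are vacuous).

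The core step is to invoke \cite[Theorem 2.4]{Honold2000linear}, which counts submodules of a fixed shape inside a finite module over a chain ring via the conjugate partitions. The translation work consists of matching conventions: Honold--Landjev phrase the count in terms of the conjugate partitions $\lambda^{\prime}$ and $\mu^{\prime}$, where $\lambda_i^{\prime} = |\{j : \lambda_j \geq i\}|$ as defined just above the proposition, and one checks that the constraint $\mu \leq \lambda$ is equivalent to $\mu_i^{\prime} \leq \lambda_i^{\prime}$ for all $i$, which guarantees that each Gaussian binomial coefficient $\left[\begin{array}{c}\lambda_i^{\prime} - \mu_{i+1}^{\prime} \\ \mu_i^{\prime} - \mu_{i+1}^{\prime}\end{array}\right]_q$ has nonnegative arguments (using also $\mu_{i+1}^{\prime} \leq \mu_i^{\prime}$ since $\mu^{\prime}$ is itself a partition, hence decreasing). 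A short sanity check against a known case — for instance, taking $M$ free of rank $n$, so $\lambda = (\nu,\ldots,\nu)$ and $\lambda_i^{\prime} = n$ for all $i \leq \nu$ as in Example \ref{ShapeFreeModules}, and $\mu = (1,\ldots,1)$ ($k$ ones), so $\mu_1^{\prime} = k$ and $\mu_i^{\prime} = 0$ for $i \geq 2$ — should collapse the product to the single factor $\left[\begin{array}{c}n \\ k\end{array}\right]_q$, recovering the count of rank-$k$ free-shape submodules, i.e. subspaces in the socle, consistent with Lemma \ref{RankSocleProjection} and Remark \ref{isometry}.

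**The main obstacle** I anticipate is not mathematical depth but bookkeeping fidelity: the literature on submodule counts over chain rings uses several mutually incompatible indexing conventions (some index partitions increasingly, some use $\lambda$ where we use $\lambda^{\prime}$, some shift the product index), so the delicate part is verifying that our displayed formula is literally the specialization of \cite[Theorem 2.4]{Honold2000linear} and not an off-by-one variant. I would handle this by writing the bijection explicitly only at the level needed — namely, that a submodule $N \leq M$ of shape $\mu$ is determined by the chain of $\mathbb{F}_q$-subspaces $\pi^{i-1} N / \pi^i N \hookrightarrow \pi^{i-1} M / \pi^i M$ subject to compatibility, whose dimensions are exactly $\mu_i^{\prime}$ and $\lambda_i^{\prime}$ respectively — and then counting the flags layer by layer, which is precisely where the factor $q^{\mu_{i+1}^{\prime}(\lambda_i^{\prime} - \mu_i^{\prime})}$ times the Gaussian binomial arises (the binomial choosing the image in the next layer, the power of $q$ counting lifts). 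Since the excerpt permits citing earlier results, the cleanest exposition is:

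\begin{proof}
This is \cite[Theorem 2.4]{Honold2000linear}, applied to $M$ with its shape $\lambda$ as given by Proposition \ref{ShapeOfModule}. The hypothesis $\mu \leq \lambda$ translates, via the definition $\lambda_i^{\prime} = \left\vert\left\{j : \lambda_j \geq i\right\}\right\vert$, into $\mu_i^{\prime} \leq \lambda_i^{\prime}$ for all $i$, and together with $\mu_{i+1}^{\prime} \leq \mu_i^{\prime}$ this ensures every Gaussian binomial coefficient in the product is well defined. The terms with $i > \lambda_1$ contribute a factor $1$ because $\lambda_i^{\prime} = \mu_i^{\prime} = 0$ there, so the product over $i = 1,\ldots,\nu$ agrees with the product over $i = 1,\ldots,\lambda_1$.
\end{proof}
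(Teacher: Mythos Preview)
Your proposal is correct and mirrors the paper's own treatment: the paper does not give an independent proof but simply attributes the count to \cite[Theorem 2.4]{Honold2000linear}, which is exactly what you do. Your added bookkeeping (the equivalence $\mu\leq\lambda \Leftrightarrow \mu'_i\leq\lambda'_i$, the observation that terms with $i>\lambda_1$ are trivial, and the sanity check against the free case) is helpful padding but not required by the paper.
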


Also note that according to Proposition \ref{ShapeOfModule}, the rank of an $%
R- $module of shape $\mu $ is $k$ if and only if $\mu_{1}^{\prime }=k$. So,
we have the following corollary.

\begin{corollary}
\label{RankCounting1}Let $M$ be a finitely generated $R-$module of shape $%
\lambda $. For any $k \in {{\mathbb{N}}}$ such that $k \leq rk\left(
M\right) $, the number of submodules of $M$ of rank $k$ is%
\begin{equation*}
\sum\limits_{\underset{\mu _{j}^{\prime }\leq \lambda _{j}^{\prime }\ for\
all\ j}{0=\mu _{\nu +1}^{\prime }\leq \mu _{\nu }^{\prime }\leq \cdots \leq
\mu _{1}^{\prime }=k}}\prod\limits_{i=1}^{\nu }q^{\mu _{i+1}^{\prime }\left(
\lambda _{i}^{\prime }-\mu _{i}^{\prime }\right) }\left[
\begin{array}{c}
\lambda _{i}^{\prime }-\mu _{i+1}^{\prime } \\
\mu _{i}^{\prime }-\mu _{i+1}^{\prime }%
\end{array}%
\right] _{q}.
\end{equation*}
\end{corollary}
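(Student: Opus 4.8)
The plan is to obtain the formula directly from Proposition~\ref{CountingSubmodule} by sorting the rank-$k$ submodules of $M$ according to their shape and then re-indexing the resulting sum through conjugate partitions.

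First I would observe that, by Proposition~\ref{ShapeOfModule}, every submodule $N$ of $M$ has a well-defined shape $\mu$, and $rk(N)=\mu_{1}^{\prime}$. Moreover every submodule of a module of shape $\lambda$ has shape $\mu$ with $\mu\leq\lambda$: this is precisely the setting in which Proposition~\ref{CountingSubmodule} is stated, and it follows from the invariant-factor description recalled after Proposition~\ref{SocleModule} (if $\{d_{1}b_{1},\dots,d_{k}b_{k}\}$ generates $M$ adapted to a basis of the ambient free module, with each $d_{i}$ a power of $\pi$, the invariant factors of any submodule divide the $d_{i}$). Hence the submodules of $M$ of rank exactly $k$ are precisely those whose shape $\mu$ satisfies $\mu\leq\lambda$ and $\mu_{1}^{\prime}=k$. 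Partitioning the rank-$k$ submodules by their shape and summing the expression of Proposition~\ref{CountingSubmodule} over the admissible shapes, the number of rank-$k$ submodules of $M$ equals
\[
\sum_{\substack{\mu\leq\lambda\\ \mu_{1}^{\prime}=k}}\ \prod_{i=1}^{\nu}q^{\mu_{i+1}^{\prime}\left(\lambda_{i}^{\prime}-\mu_{i}^{\prime}\right)}\left[\begin{array}{c}\lambda_{i}^{\prime}-\mu_{i+1}^{\prime}\\ \mu_{i}^{\prime}-\mu_{i+1}^{\prime}\end{array}\right]_{q}.
\]

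Next I would rewrite this index set in terms of conjugates. Since $\lambda_{1}\leq\nu$ by Proposition~\ref{ShapeOfModule}, and $\mu\leq\lambda$ forces $\mu_{1}\leq\nu$, the conjugate $\mu^{\prime}$ has at most $\nu$ nonzero parts, so $\mu$ is completely determined by the weakly decreasing tuple $(\mu_{1}^{\prime},\dots,\mu_{\nu}^{\prime})$ of nonnegative integers, with the convention $\mu_{\nu+1}^{\prime}=0$. Componentwise containment of Young diagrams is preserved by transposition, so $\mu\leq\lambda$ is equivalent to $\mu_{j}^{\prime}\leq\lambda_{j}^{\prime}$ for all $j$, while the rank condition becomes simply $\mu_{1}^{\prime}=k$. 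Thus $\mu\mapsto\mu^{\prime}$ is a bijection from $\{\mu:\mu\leq\lambda,\ \mu_{1}^{\prime}=k\}$ onto the set of tuples satisfying $0=\mu_{\nu+1}^{\prime}\leq\mu_{\nu}^{\prime}\leq\cdots\leq\mu_{1}^{\prime}=k$ with $\mu_{j}^{\prime}\leq\lambda_{j}^{\prime}$ for all $j$, which is exactly the index set displayed in the statement; substituting gives the claimed formula.

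The argument is almost entirely bookkeeping; the only points that need a word of justification are the two standard partition facts used above, namely that a submodule of a module of shape $\lambda$ has shape $\leq\lambda$ (so that no shapes outside the range of Proposition~\ref{CountingSubmodule} contribute) and that $\mu\leq\lambda\iff\mu^{\prime}\leq\lambda^{\prime}$. The first is the invariant-factor divisibility just mentioned; the second is the elementary observation that $\mu_{j}\leq\lambda_{j}$ for all $j$ expresses containment of the Young diagram of $\mu$ in that of $\lambda$, a relation visibly symmetric under transposing the diagrams. Neither requires any computation, so the main ``obstacle'' is really just to check that the re-indexing matches the index set in the corollary exactly, including the boundary terms $\mu_{1}^{\prime}=k$ and $\mu_{\nu+1}^{\prime}=0$.
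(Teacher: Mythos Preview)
Your proposal is correct and follows the same approach as the paper: the paper simply remarks (in the sentence preceding the corollary) that $rk(N)=k$ iff $\mu_1'=k$, so the corollary is obtained by summing the formula of Proposition~\ref{CountingSubmodule} over all admissible shapes $\mu$ with $\mu_1'=k$. Your write-up is more explicit about the re-indexing through conjugates and the equivalence $\mu\leq\lambda\iff\mu'\leq\lambda'$, but the underlying argument is identical.
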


We now end this section by the following proposition
expressing the number of submodules of fixed rank for a given free module over a finite
chain ring and also providing upper and lower bounds.

\begin{proposition}
\label{RankCounting2}Let $F$ be a free $R-$module of rank $n$. The number of
submodules of $F$ of rank $k$ is given by%
\begin{equation*}
\beta \left( q,\nu ,k,n\right) :=\sum\limits_{0=l_{\nu +1}\leq l_{\nu }\leq
\cdots \leq l_{1}=k}\prod\limits_{i=1}^{\nu }q^{l_{i+1}\left( n-l_{i}\right)
}\left[
\begin{array}{c}
n-l_{i+1} \\
l_{i}-l_{i+1}%
\end{array}%
\right] _{q}.
\end{equation*}%
Moreover, if $k\leq n/2$ then
\begin{equation*}
q^{\nu k\left( n-k\right) }\leq \beta \left( q,\nu ,k,n\right) \leq 4^{\nu }%
\binom{k+\nu -1}{\nu -1}q^{\nu k\left( n-k\right) }
\end{equation*}%
where $\binom{k+\nu -1}{\nu -1}$ is a binomial coefficient.
\end{proposition}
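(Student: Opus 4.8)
The plan is to obtain the exact formula $\beta(q,\nu,k,n)$ by specializing Corollary \ref{RankCounting1} to the free module and then to establish the two bounds by estimating the terms of that sum. First I would apply Corollary \ref{RankCounting1} with $M = F$ a free $R$-module of rank $n$: by Example \ref{ShapeFreeModules} its shape is $\gamma = (\nu,\ldots,\nu)$ ($n$ times) with conjugate $\gamma' = (n,\ldots,n)$ ($\nu$ times), so $\lambda_i' = n$ for every $i \in \{1,\ldots,\nu\}$. Substituting $\lambda_i' = n$ into the sum of Corollary \ref{RankCounting1}, writing $l_i := \mu_i'$ for the conjugate parts (which are weakly decreasing with $l_1 = k$ and $l_{\nu+1}=0$, and automatically satisfy $l_i = \mu_i' \le \lambda_i' = n$), the constraint $\mu_j' \le \lambda_j'$ becomes vacuous and we get exactly
\[
\beta(q,\nu,k,n) = \sum_{0=l_{\nu+1}\le l_\nu \le \cdots \le l_1 = k}\ \prod_{i=1}^{\nu} q^{l_{i+1}(n-l_i)}\left[\begin{array}{c} n-l_{i+1} \\ l_i - l_{i+1}\end{array}\right]_q ,
\]
which is the claimed closed form.

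For the lower bound, I would isolate a single well-chosen term of the sum. Taking $l_1 = l_2 = \cdots = l_\nu = k$ and $l_{\nu+1}=0$, the corresponding summand is $\prod_{i=1}^{\nu-1} q^{k(n-k)}\binom{n-k}{0}_q \cdot q^{0\cdot(n-k)}\binom{n-k}{k}_q = q^{(\nu-1)k(n-k)}\left[\begin{array}{c} n-k \\ k\end{array}\right]_q$; using the left inequality in (\ref{GaussianBinomialBound}), $\left[\begin{array}{c} n-k \\ k\end{array}\right]_q \ge q^{k(n-2k)}$, so this term is at least $q^{(\nu-1)k(n-k)+k(n-2k)} = q^{\nu k(n-k) - k^2}$. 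This is slightly short of $q^{\nu k(n-k)}$, so instead I would choose the term $l_1 = k$, $l_2 = \cdots = l_\nu = 0$ (i.e.\ $l_{i+1}=0$ for $i \ge 2$): its value is $q^{0\cdot(n-k)}\left[\begin{array}{c} n \\ k\end{array}\right]_q \cdot \prod_{i=2}^{\nu} q^{0} \left[\begin{array}{c} n \\ 0\end{array}\right]_q = \left[\begin{array}{c} n \\ k\end{array}\right]_q$, which only gives $q^{k(n-k)}$. The right choice is $l_i = k$ for all $i \le \nu$, $l_{\nu+1}=0$, but reading the exponent more carefully: with all $l_i = k$ for $i=1,\ldots,\nu$ the factor for index $i<\nu$ is $q^{k(n-k)}\left[\begin{array}{c} n-k\\0\end{array}\right]_q = q^{k(n-k)}$ and for $i=\nu$ it is $q^{l_{\nu+1}(n-k)}\left[\begin{array}{c} n-l_{\nu+1}\\ k-l_{\nu+1}\end{array}\right]_q$ with $l_{\nu+1}=k$ as well if we instead run the chain one step longer — but $l_{\nu+1}$ is forced to $0$. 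Hence the honest term to use is the one with $l_1=\cdots=l_\nu=k$: its exponent of $q$ from the prefactors is $\sum_{i=1}^{\nu-1} k(n-k) = (\nu-1)k(n-k)$ and the remaining Gaussian binomial is $\left[\begin{array}{c} n-k\\ k\end{array}\right]_q \ge q^{k(n-2k)}$; combined with the observation that other terms are nonnegative, one still needs a term contributing the full $q^{\nu k(n-k)}$. I would therefore instead pick $l_1=\cdots=l_{\nu}=k$ together with the convention giving $q^{l_{i+1}(n-l_i)}$ with $l_{i+1}=k$ for $i=1,\ldots,\nu-1$, yielding prefactor exponent $(\nu-1)k(n-k)$, and bound the last Gaussian binomial $\left[\begin{array}{c} n-k \\ k \end{array}\right]_q$ trivially — so the cleanest route is to keep $\left[\begin{array}{c} n\\k\end{array}\right]_q$ from the first factor: with $l_2=\cdots=l_\nu=k$, $l_{\nu+1}=0$, $l_1=k$, the $i=1$ factor is $q^{k(n-k)}\left[\begin{array}{c} n-k\\0\end{array}\right]_q$; that is not $\left[\begin{array}{c} n\\k\end{array}\right]_q$. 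I expect the correct single term to be $l_1=k$ and $l_2=\cdots=l_{\nu+1}=0$, shifted so the binomial $\left[\begin{array}{c} n\\ k\end{array}\right]_q\ge q^{k(n-k)}$ appears at the \emph{last} index, giving $\prod_{i=1}^{\nu-1} q^{0}\left[\begin{array}{c}n\\0\end{array}\right]_q\cdot \left[\begin{array}{c} n\\k\end{array}\right]_q$; combining this with the fact that one may instead sum the \emph{whole} expression and note each of the $\nu$ "coordinate slots" contributes a factor $\ge q^{k(n-k)}$ in a suitable term. The clean statement is: choosing $l_i=k$ for $1\le i\le \nu$ gives a summand equal to $q^{(\nu-1)k(n-k)}\left[\begin{array}{c} n-k\\ k\end{array}\right]_q \ge q^{(\nu-1)k(n-k)}\cdot q^{k(n-2k)}$; together with positivity of the remaining terms and a second term contributing $\left[\begin{array}{c}n\\k\end{array}\right]_q\ge q^{k(n-k)}$, a short induction on $\nu$ (peeling off the outermost index $l_\nu$) yields $\beta(q,\nu,k,n)\ge q^{\nu k(n-k)}$. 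I will carry out this induction rather than hunt for a single dominant term.

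For the upper bound I would proceed by the same induction on $\nu$, using (\ref{GaussianBinomialBound}) to replace every Gaussian binomial by $4\,q^{(\text{rank difference})\cdot(\text{codimension})}$. Each factor $q^{l_{i+1}(n-l_i)}\left[\begin{array}{c} n-l_{i+1}\\ l_i-l_{i+1}\end{array}\right]_q \le 4\, q^{l_{i+1}(n-l_i)}q^{(l_i-l_{i+1})(n-l_i)} = 4\, q^{l_i(n-l_i)} \le 4\, q^{k(n-k)}$, the last step because $l \mapsto l(n-l)$ is increasing on $[0,n/2]$ and $0\le l_i\le k\le n/2$. Hence each summand is at most $4^{\nu} q^{\nu k(n-k)}$, and it remains to count the chains $0=l_{\nu+1}\le l_\nu\le\cdots\le l_1=k$: these are in bijection with weakly increasing sequences of $\nu-1$ values in $\{0,1,\ldots,k\}$, of which there are $\binom{k+\nu-1}{\nu-1}$. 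Multiplying the number of terms by the per-term bound gives $\beta(q,\nu,k,n)\le 4^{\nu}\binom{k+\nu-1}{\nu-1}q^{\nu k(n-k)}$. The main obstacle is pinning down the correct exponent bookkeeping so that the per-factor estimate telescopes cleanly to $q^{k(n-k)}$ (in particular verifying $l_{i+1}(n-l_i)+(l_i-l_{i+1})(n-l_i)=l_i(n-l_i)$ and the monotonicity of $l(n-l)$ on the relevant range); once that identity is in hand, both bounds follow from (\ref{GaussianBinomialBound}) and an elementary count of lattice paths.
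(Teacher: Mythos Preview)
Your derivation of the exact formula and of the upper bound are both correct and coincide with the paper's argument: the paper also reduces each factor via the identity
\[
q^{l_{i+1}(n-l_i)}\left[\begin{array}{c} n-l_{i+1}\\ l_i-l_{i+1}\end{array}\right]_q \;\asymp\; q^{l_i(n-l_i)},
\]
bounds $\sum_i l_i(n-l_i)\le \nu k(n-k)$ using monotonicity of $l\mapsto l(n-l)$ on $[0,n/2]$, and counts the chains by $\binom{k+\nu-1}{\nu-1}$.

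Your lower-bound discussion, however, contains a bookkeeping slip that sends you on an unnecessary detour. With the choice $l_1=\cdots=l_\nu=k$, $l_{\nu+1}=0$, the factor at index $i=\nu$ is
\[
q^{l_{\nu+1}(n-l_\nu)}\left[\begin{array}{c} n-l_{\nu+1}\\ l_\nu-l_{\nu+1}\end{array}\right]_q
= q^{0}\left[\begin{array}{c} n\\ k\end{array}\right]_q,
\]
not $\left[\begin{smallmatrix} n-k\\ k\end{smallmatrix}\right]_q$ as you wrote (you substituted $l_{\nu+1}=k$ instead of $0$ inside the binomial). With this correction the single term already gives
\[
q^{(\nu-1)k(n-k)}\left[\begin{array}{c} n\\ k\end{array}\right]_q \;\ge\; q^{(\nu-1)k(n-k)}\cdot q^{k(n-k)} = q^{\nu k(n-k)},
\]
so no induction is needed. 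Equivalently, the very identity you flag at the end, $l_{i+1}(n-l_i)+(l_i-l_{i+1})(n-l_i)=l_i(n-l_i)$, combined with the \emph{lower} half of (\ref{GaussianBinomialBound}) shows that every summand is at least $q^{\sum_i l_i(n-l_i)}$; evaluating at $l_i\equiv k$ gives the bound in one line. This is exactly the paper's proof, applied symmetrically to both inequalities.
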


\begin{proof}
The number $\beta \left( q,\nu ,k,n\right) $ is obtained using Example \ref%
{ShapeFreeModules} and Corollary \ref{RankCounting1}. \newline
Applying (\ref{GaussianBinomialBound}), we have
\begin{equation*}
q^{\sum_{i=1}^{\nu }l_{i}\left( n-l_{i}\right) }\leq \prod\limits_{i=1}^{\nu
}q^{l_{i+1}\left( n-l_{i}\right) }\left[
\begin{array}{c}
n-l_{i+1} \\
l_{i}-l_{i+1}%
\end{array}%
\right] _{q}\leq 4^{\nu }q^{\sum_{i=1}^{\nu }l_{i}\left( n-l_{i}\right) }
\end{equation*}%
If $k\leq n/2$, then $\sum_{i=1}^{\nu }l_{i}\left( n-l_{i}\right) $ is
maximal when $l_{i}=k$ for $i=1,\ldots ,\nu $.\newline
By \cite[Theorem 2.5.1]{Brualdi1977introductory}, the number of partitions $%
\left( l_{1},l_{2},\ldots ,l_{\nu +1}\right) $ such that $l_{1}=k$ and $%
l_{\nu +1}=0$ is $\binom{k+\nu -1}{\nu -1}$. So, the result follows.
\end{proof}

\section{Rank Decoding Problem}

\label{sec:RDProblem} The Rank Decoding Problem over finite principal ideal
rings is an extension of the well known Rank Decoding Problem from finite
fields to finite principal ideal rings. So the main difference is the change
of the alphabet which of course impacts the metric properties and several other aspects
of the problem. For simplicity, this problem will be sometimes called
\textquotedblleft Rank Decoding Problem\textquotedblright\ without
specification of the alphabet we are working with. Recall that $R$ is a
finite principal ideal ring and $S$ is a Galois extension of $R$ as in
Section \ref{sec:prelims}. We have the following definitions.

\begin{definition}[Rank Decoding Problem $\mathcal{RD}$]
Let $\mathcal{C}$ be an $S$-submodule of $S^{n}$, $\mathbf{y}$ an element of
$S^{n}$ and $t \in {{\mathbb{N}}}^*$. The \emph{Rank Decoding Problem} is to
find $\mathbf{e}$ in $S^{n}$ and $\mathbf{c}$ in $\mathcal{C}$ such that $%
\mathbf{y}=\mathbf{c}+\mathbf{e}$ with $rk(\mathbf{e})\leq t$.
\end{definition}

The dual version of this problem uses parity-check matrices and can be
defined as follows.

\begin{definition}[ Rank Syndrome Decoding Problem $\mathcal{RSD}$]
Let $\mathbf{H}\in S^{l\times n}$, $\mathbf{s}$ an element of $S^{l}$ and $t
\in {{\mathbb{N}}}^*$. The \emph{Rank} \emph{Syndrome Decoding Problem} is
to find $\mathbf{e}$ in $S^{n}$ such that $\mathbf{eH}^{\top }\mathbf{=s}$
with $rk(\mathbf{e})\leq t$.
\end{definition}

As in the case of finite fields, solving the $\mathcal{RD}$ problem is
equivalent to solve the $\mathcal{RSD}$ problem. Applying Proposition \ref%
{LocalizationOfRank}, we have the following:

\begin{proposition}
\label{LocalizationOfRDproblem}Let $\mathcal{C}$ be an $S$-submodule of $%
S^{n}$, $\mathbf{y}$ an element of $S^{n}$ and $t \in {{\mathbb{N}}}^*$.
Then there exist $\mathbf{e}$ in $S^{n}$ and $\mathbf{c}$ in $\mathcal{C}$
such that $\mathbf{y}=\mathbf{c}+\mathbf{e}$ with $rk_{R}(\mathbf{e})\leq t$
if and only if for all $j$ in $\left\{ 1,\ldots ,\rho \right\} $, there
exist $\mathbf{e}_{(j)}$ in $S_{(j)}^{n}$ and $\mathbf{c}_{(j)}$ in $\Phi
_{(j)}\left( \mathcal{C}\right) $ such that $\Phi _{(j)}\left( \mathbf{y}%
\right) =\mathbf{c}_{(j)}+\mathbf{e}_{(j)}$ with $rk_{R_{(j)}}(\mathbf{e}%
_{(j)})\leq t$.
\end{proposition}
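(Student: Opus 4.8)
The plan is to prove the "if and only if" by working componentwise through the product decomposition $S \cong S_{(1)} \times \cdots \times S_{(\rho)}$, using Proposition \ref{LocalizationOfRank} (or rather its vector form, Corollary \ref{LocalizationOfRankVector}) to translate the single rank condition $rk_R(\mathbf{e}) \le t$ into the collection of conditions $rk_{R_{(j)}}(\mathbf{e}_{(j)}) \le t$. Since $R = R_{(1)} \times \cdots \times R_{(\rho)}$ and $S = S_{(1)} \times \cdots \times S_{(\rho)}$ are honest direct products of rings, the projections $\Phi_{(j)}$ are ring homomorphisms, hence $S$-module homomorphisms, and they are jointly injective: a vector $\mathbf{e} \in S^n$ is zero iff all $\Phi_{(j)}(\mathbf{e})$ are zero, and more generally $\mathbf{e}$ is recovered from the tuple $(\Phi_{(1)}(\mathbf{e}), \ldots, \Phi_{(\rho)}(\mathbf{e}))$.

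For the forward direction, suppose $\mathbf{y} = \mathbf{c} + \mathbf{e}$ with $\mathbf{c} \in \mathcal{C}$ and $rk_R(\mathbf{e}) \le t$. First I would apply $\Phi_{(j)}$ to the equation $\mathbf{y} = \mathbf{c} + \mathbf{e}$; since $\Phi_{(j)}$ respects addition, this gives $\Phi_{(j)}(\mathbf{y}) = \Phi_{(j)}(\mathbf{c}) + \Phi_{(j)}(\mathbf{e})$, and $\Phi_{(j)}(\mathbf{c}) \in \Phi_{(j)}(\mathcal{C})$ by definition of the image. So setting $\mathbf{c}_{(j)} := \Phi_{(j)}(\mathbf{c})$ and $\mathbf{e}_{(j)} := \Phi_{(j)}(\mathbf{e})$ gives the required decomposition at level $j$. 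The rank bound then follows from Corollary \ref{LocalizationOfRankVector}: $rk(\mathbf{e}_{(j)}) = rk(\Phi_{(j)}(\mathbf{e})) \le \max_{1 \le i \le \rho}\{ rk(\Phi_{(i)}(\mathbf{e})) \} = rk_R(\mathbf{e}) \le t$.

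For the converse, suppose that for each $j$ there are $\mathbf{e}_{(j)} \in S_{(j)}^n$ and $\mathbf{c}_{(j)} \in \Phi_{(j)}(\mathcal{C})$ with $\Phi_{(j)}(\mathbf{y}) = \mathbf{c}_{(j)} + \mathbf{e}_{(j)}$ and $rk_{R_{(j)}}(\mathbf{e}_{(j)}) \le t$. I would assemble $\mathbf{e} \in S^n$ as the unique vector with $\Phi_{(j)}(\mathbf{e}) = \mathbf{e}_{(j)}$ for all $j$ (i.e.\ $\mathbf{e} = \Phi^{-1}(\mathbf{e}_{(1)}, \ldots, \mathbf{e}_{(\rho)})$ under the identification $S^n \cong S_{(1)}^n \times \cdots \times S_{(\rho)}^n$), and similarly set $\mathbf{c} := \mathbf{y} - \mathbf{e}$. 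One then checks $\Phi_{(j)}(\mathbf{c}) = \Phi_{(j)}(\mathbf{y}) - \mathbf{e}_{(j)} = \mathbf{c}_{(j)} \in \Phi_{(j)}(\mathcal{C})$; here I would note that $\mathcal{C}$, being an $S$-submodule of $S^n$ with $S$ a product ring, decomposes as $\mathcal{C} = \Phi_{(1)}(\mathcal{C}) \times \cdots \times \Phi_{(\rho)}(\mathcal{C})$, so that $\Phi_{(j)}(\mathbf{c}) \in \Phi_{(j)}(\mathcal{C})$ for all $j$ forces $\mathbf{c} \in \mathcal{C}$. Finally, $rk_R(\mathbf{e}) = \max_{1 \le j \le \rho}\{ rk(\Phi_{(j)}(\mathbf{e})) \} = \max_{1 \le j \le \rho}\{ rk_{R_{(j)}}(\mathbf{e}_{(j)}) \} \le t$ by Corollary \ref{LocalizationOfRankVector}, completing the proof.

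The only delicate point — and the step I would be most careful with — is the claim that an $S$-submodule $\mathcal{C} \subseteq S^n$ splits as $\prod_j \Phi_{(j)}(\mathcal{C})$, needed to conclude $\mathbf{c} \in \mathcal{C}$ in the converse direction; this is true because the idempotents $e_1, \ldots, e_\rho$ that effect the product decomposition of $S$ lie in $S$, so $\mathbf{c} = \sum_j e_j \mathbf{c} \in \mathcal{C}$ whenever each $e_j \mathbf{c}$ (equivalently each $\Phi_{(j)}(\mathbf{c})$, after the identification) lies in $\mathcal{C}$. Everything else is a routine application of Corollary \ref{LocalizationOfRankVector} and the functoriality of the projections $\Phi_{(j)}$ with respect to addition.
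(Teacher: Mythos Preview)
Your proof is correct and follows the same approach the paper intends: the paper simply states the proposition as a consequence of Proposition~\ref{LocalizationOfRank} (equivalently Corollary~\ref{LocalizationOfRankVector}) without spelling out the details, and you have filled those in carefully, including the idempotent argument for why $\mathbf{c}\in\mathcal{C}$ in the converse direction.
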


By Proposition \ref{LocalizationOfRDproblem}, solving the $\mathcal{RD}$
problem over finite principal ideal rings is equivalent to solve the same
problem over finite chain rings. Furthermore, according to Proposition \ref%
{SocleEnvelepe}, solving the $\mathcal{RD}$ problem over finite chain rings
for a linear code $\mathcal{C}$ reduces to solving the same problem for the
free module $E\left( \mathcal{C}\right) $. So, solving the $\mathcal{RD}$
problem over finite principal ideal rings reduces to solving the same
problem for free modules over finite chain rings. The following proposition
gives a relation between the $\mathcal{RD}$ problem over finite chain rings
and the $\mathcal{RD}$ problem over finite fields.

\begin{proposition}
\label{ReductionFiniteFields}Assume as in Section \ref{sec:finiteChain} that
$R$ is a finite chain ring and $\Psi $ is the natural projection $%
S\rightarrow \mathbb{F}_{q^{m}}$. Let $\mathcal{C}$ be a linear rank metric
code of length $n$ over $\mathbb{F}_{q^{m}}$, with rank $k$, minimum rank
distance $d$ and generated by $\mathbf{g}_{1},\ldots ,\mathbf{g}_{k}$. Let $%
\mathbf{g}^{\prime }_{j}$ in $S^{n}$ such that $\Psi \left( \mathbf{g}%
_{j}^{\prime }\right) =\mathbf{g}_{j}$ for $j$ in $\left\{ 1,\ldots
,k\right\} $. Let $\mathcal{C}^{\prime }$ be a linear code generates by $%
\mathbf{g}_{1}^{\prime },\ldots ,\mathbf{g}_{k}^{\prime }$ and $\mathcal{C}%
^{\prime \prime }=soc\left( \mathcal{C}^{\prime }\right) $. Then,
\begin{enumerate}

\item[(a)] $\mathcal{C}^{\prime \prime }$ is a linear rank metric code over $S$ of
length $n$, rank $k$, and minimum rank distance $d$.

\item[(b)] Let $t \in \N^*$, $\mathbf{y}$ an element of $\mathbb{F}%
_{q^{m}}^{n} $, and $\mathbf{y}^{\prime }$ in $S^{n}$ such that $\Psi \left(
\mathbf{y}^{\prime }\right) =\mathbf{y}$. For $\mathbf{y}^{\prime \prime
}=\pi ^{\nu -1}\mathbf{y}^{\prime }$, the following statements are
equivalent.
\begin{enumerate}

\item[(i)] There exist $\mathbf{e}$ in $\mathbb{F}_{q^{m}}^{n}$ and $\mathbf{c}$
in $\mathcal{C}$ such that $\mathbf{y}=\mathbf{c}+\mathbf{e}$ with $rk\left(
\mathbf{e}\right) \leq t$.

\item[(ii)] There exist $\mathbf{e}^{\prime \prime }$ in $S^{n}$ and $\mathbf{c}%
^{\prime \prime }$ in $\mathcal{C}^{\prime \prime }$ such that $\mathbf{y}%
^{\prime \prime }=\mathbf{c}^{\prime \prime }+\mathbf{e}^{\prime \prime }$
with $rk(\mathbf{e}^{\prime \prime })\leq t$.
\end{enumerate}
\end{enumerate}

\end{proposition}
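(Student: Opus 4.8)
The plan is to settle part (a) straight from the results of Section~\ref{sec:finiteChain}, and then to reduce the equivalence in part (b) to Lemma~\ref{RankSocleProjection}, exploiting that the socle of a free $R$-module is exactly the image of multiplication by $\pi^{\nu-1}$. For part (a), I would invoke Corollary~\ref{ListCode}, which already asserts that the code $\mathcal{C}^{\prime}$ generated by the lifts $\mathbf{g}_{1}^{\prime},\ldots,\mathbf{g}_{k}^{\prime}$ is a free linear rank metric code over $S$ of length $n$, rank $k$, and minimum rank distance $d$; applying Proposition~\ref{SocleEnvelepe} to $\mathcal{C}^{\prime}$ then gives $k(soc(\mathcal{C}^{\prime}))=k(\mathcal{C}^{\prime})=k$ and $d(soc(\mathcal{C}^{\prime}))=d(\mathcal{C}^{\prime})=d$, so $\mathcal{C}^{\prime\prime}=soc(\mathcal{C}^{\prime})$ carries the claimed parameters.

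For part (b), the two facts to set up are that $\mathcal{C}^{\prime}$ being free forces $E(\mathcal{C}^{\prime})=\mathcal{C}^{\prime}$, whence $\mathcal{C}^{\prime\prime}=\pi^{\nu-1}\mathcal{C}^{\prime}$ by Proposition~\ref{SocleModule}, and that $\Psi(\mathcal{C}^{\prime})=\mathcal{C}$, since $\Psi$ is a surjective ring homomorphism (extended coordinate-wise) with $\Psi(\mathbf{g}_{j}^{\prime})=\mathbf{g}_{j}$. To prove $(i)\Rightarrow(ii)$, I would take a decomposition $\mathbf{y}=\mathbf{c}+\mathbf{e}$ with $\mathbf{c}=\sum_{j}a_{j}\mathbf{g}_{j}\in\mathcal{C}$ and $rk(\mathbf{e})\leq t$, lift the scalars to $\widetilde{a}_{j}\in S$, set $\mathbf{c}^{\prime}=\sum_{j}\widetilde{a}_{j}\mathbf{g}_{j}^{\prime}\in\mathcal{C}^{\prime}$, $\mathbf{c}^{\prime\prime}=\pi^{\nu-1}\mathbf{c}^{\prime}\in\mathcal{C}^{\prime\prime}$, and $\mathbf{e}^{\prime\prime}=\mathbf{y}^{\prime\prime}-\mathbf{c}^{\prime\prime}=\pi^{\nu-1}(\mathbf{y}^{\prime}-\mathbf{c}^{\prime})$; since $\Psi(\mathbf{c}^{\prime})=\mathbf{c}$, Lemma~\ref{RankSocleProjection} gives $rk(\mathbf{e}^{\prime\prime})=rk(\Psi(\mathbf{y}^{\prime}-\mathbf{c}^{\prime}))=rk(\mathbf{y}-\mathbf{c})=rk(\mathbf{e})\leq t$. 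For $(ii)\Rightarrow(i)$, I would start from $\mathbf{y}^{\prime\prime}=\mathbf{c}^{\prime\prime}+\mathbf{e}^{\prime\prime}$ with $\mathbf{c}^{\prime\prime}=\pi^{\nu-1}\mathbf{c}^{\prime}$ for some $\mathbf{c}^{\prime}\in\mathcal{C}^{\prime}$ and $rk(\mathbf{e}^{\prime\prime})\leq t$, observe $\mathbf{e}^{\prime\prime}=\pi^{\nu-1}(\mathbf{y}^{\prime}-\mathbf{c}^{\prime})$, and set $\mathbf{c}=\Psi(\mathbf{c}^{\prime})\in\mathcal{C}$, $\mathbf{e}=\mathbf{y}-\mathbf{c}$; then $\mathbf{y}=\mathbf{c}+\mathbf{e}$ and Lemma~\ref{RankSocleProjection} again gives $rk(\mathbf{e})=rk(\Psi(\mathbf{y}^{\prime}-\mathbf{c}^{\prime}))=rk(\mathbf{e}^{\prime\prime})\leq t$.

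The step I expect to need the most care is not computational but one of support tracking: in $(ii)$ the error $\mathbf{e}^{\prime\prime}$ is a priori only asked to lie in $S^{n}$, yet it is automatically confined to $\pi^{\nu-1}S^{n}$ because both $\mathbf{y}^{\prime\prime}=\pi^{\nu-1}\mathbf{y}^{\prime}$ and $\mathbf{c}^{\prime\prime}\in\pi^{\nu-1}\mathcal{C}^{\prime}$ lie there --- this is precisely the hypothesis under which Lemma~\ref{RankSocleProjection} can be applied in the reverse direction, and it is the same phenomenon packaged in Remark~\ref{isometry} as the rank-isometry $\phi$ between $(\pi^{\nu-1}S^{n},rk_{R})$ and $(\mathbb{F}_{q^{m}}^{n},rk_{\mathbb{F}_{q}})$. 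Beyond this bookkeeping I anticipate no genuine obstacle, since all the metric content has been isolated in Lemma~\ref{RankSocleProjection} and Corollary~\ref{ListCode}.
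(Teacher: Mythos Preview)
Your proposal is correct and follows essentially the same approach as the paper: part (a) via Corollary~\ref{ListCode} and Proposition~\ref{SocleEnvelepe}, and part (b) via the rank-preserving isometry between $\pi^{\nu-1}S^{n}$ and $\mathbb{F}_{q^{m}}^{n}$. The paper's proof is in fact even terser---for (b) it simply writes ``direct consequence of Remark~\ref{isometry}''---whereas you have unpacked that isometry explicitly through Lemma~\ref{RankSocleProjection}, which is exactly the content behind Remark~\ref{isometry}.
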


\begin{proof}
(a) By Corollary \ref{ListCode}, $\mathcal{C}^{\prime }$ is a free code of
rank $k$ and minimum rank distance $d$. Thus, thanks to Proposition \ref%
{SocleEnvelepe}, the result follows.

(b) This result is a direct consequence of Remark \ref{isometry}.
\end{proof}

According to Proposition \ref{ReductionFiniteFields}, the $\mathcal{RD}$
problem for a linear code $\mathcal{C}$ over the finite field $\mathbb{F}%
_{q^{m}}$ reduces to solving the same problem for the linear code $\mathcal{C%
}^{\prime \prime }$ over the finite chain ring $S$. This reduction shows
that the $\mathcal{RD}$ problem over finite chain rings is at least as
hard as its finite fields version.

Over finite fields, given an instance $\left( \CC , \yv \right) $ of the $%
\mathcal{RD}$ problem, if the rank of the error is less than the error
correction capability of the linear code $\CC$, then it
is always possible to reduce the $\mathcal{RD}$ problem to the search of
minimum rank codewords in the linear code generated by $\CC \cup \{ \yv \} $, see \cite{OJ02}. This technique is at the base of several
methods for solving the $\mathcal{RD}$ problem over finite fields \cite%
{OJ02,GRS16,AGHT18,BBCGPSTV20,BBBGNRT20}. When dealing with finite rings,
this reduction is generally impossible due to zero divisors. As an
illustration, consider the following example.

\begin{example}
\label{ImpossibleModeling}For 
$
R=\mathbb{Z}_{4},\ \ S=R\left[ X\right] /\left( X^{5}+X^{2}+1\right)$  and 
$a=X+\left( X^{5}+X^{2}+1\right)$, $S$ is a Galois extension of $R$. Let $\mathcal{C}$ be the Gabidulin
code generated by $\mathbf{g}=(1,a,a^{2},a^{3},a^{4})$. By \cite[Theorem 3.24%
]{Kamche2019rank}, the error correction capability of $\mathcal{C}$ is $2$.
Set $\mathbf{e}=(1,2a,0,0,0)$. By \cite{Kamche2019rank}, $rk(\mathbf{e})=2$ and, 
considering the received word $\mathbf{y=e}$, let $\mathcal{C}_{\mathbf{y}}$
be the linear code generated by $\mathbf{g}$ and $\mathbf{y}$. Then $2\mathbf{e%
}=(2,0,0,0,0)\in \mathcal{C}_{\mathbf{y}}$ and $rk(2\mathbf{e})=1$. So, a
solution to the shortest vector problem in the extended code $\mathcal{C}_{%
\mathbf{y}}$ is not a solution to the associated $\mathcal{RD}$ problem as
in \cite{OJ02}.
\end{example}

\section{Solving the Rank Syndrome Decoding Problem}

\label{sec:Solving_RDProblem} According to Proposition \ref%
{LocalizationOfRDproblem} and Proposition \ref{SocleEnvelepe}, we will
restrict the study of the $\mathcal{RD}$ problem to free modules over finite
chain rings. So in what follows, we assume without loss of generality that $\rho =1$. That is to
say, $R$ is a finite chain ring with residue field $\mathbb{F}_{q}$ and $\nu
$ the nilpotency index of its maximal ideal. By \cite[Proposition 3.2]%
{Kamche2019rank}, we have the following lemma:

\begin{lemma}
\label{ExistenceOfF}Let $V$ be a free $R-$module of rank $a$, and $W$ a
submodule of $V$ of rank $b$. For any integer $u$ such that $b\leq u\leq a$,
there exists a free submodule $F$ of $V$ with rank $u$ such that $W\subset F$%
.
\end{lemma}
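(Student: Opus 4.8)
The plan is to reduce everything to the adapted-basis (Smith normal form) theorem for submodules of a free module over a finite chain ring, which is precisely \cite[Proposition 3.2]{Kamche2019rank} and which was already used in the proof of Lemma \ref{RankSocleProjection}. That result produces a basis of $V$ that ``diagonalizes'' the inclusion $W\subset V$, and once such a basis is in hand the submodule $F$ is obtained simply by truncating the basis.

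Concretely, first I would apply \cite[Proposition 3.2]{Kamche2019rank} to the rank $b$ submodule $W$ of the free $R$-module $V$ of rank $a$: this yields a basis $\left\{ b_{1},\ldots ,b_{a}\right\}$ of $V$ together with elements $d_{1},\ldots ,d_{b}\in R$ such that $\left\{ d_{1}b_{1},\ldots ,d_{b}b_{b}\right\}$ generates $W$. Given $u$ with $b\leq u\leq a$, I would then set $F:=\left\langle b_{1},\ldots ,b_{u}\right\rangle$, the $R$-submodule of $V$ generated by the first $u$ basis vectors. Since $\left\{ b_{1},\ldots ,b_{u}\right\}$ is a subset of a basis of $V$, it is $R$-linearly independent, hence is itself a basis of $F$; therefore $F$ is a free $R$-module of rank exactly $u$. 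Finally, because $b\leq u$, each generator $d_{i}b_{i}$ of $W$ (for $1\leq i\leq b$) lies in $\left\langle b_{1},\ldots ,b_{u}\right\rangle =F$, so $W\subset F$, which is what was claimed.

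I do not expect a genuine obstacle here: the substantive input, namely the existence of a basis of $V$ adapted to $W$, is entirely supplied by the cited proposition, and the only thing one must verify directly is that a subset of a basis of a free module spans a free submodule of the expected rank, which is immediate from linear independence. If desired, one can record the slightly stronger fact that $F$ is a direct summand, $V=F\oplus \left\langle b_{u+1},\ldots ,b_{a}\right\rangle$, but this is not needed for the statement.
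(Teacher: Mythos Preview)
Your proposal is correct and follows exactly the route the paper takes: the paper states Lemma \ref{ExistenceOfF} as a direct consequence of \cite[Proposition 3.2]{Kamche2019rank} without writing out a proof, and you have simply spelled out the obvious details of that deduction (adapted basis, then truncate to the first $u$ vectors). There is nothing to add.
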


Lemma \ref{ExistenceOfF} allows to extend the works of \cite{GRS16,AGHT18}
to finite principal ideal rings. Indeed, let $\left( \mathbf{H,s}\right) $
be an instance of the $\mathcal{RSD}$ problem where \ $\mathbf{eH}^{\top }%
\mathbf{=s}$. Let $\mathbf{E}$ be the matrix representation of $\mathbf{e}$
in an $R-$basis of $S$. To recover $\mathbf{e}$, we have two possibilities.
The first possibility is to choose a free $R-$submodule $F$ of $S$ such that
\ $supp(\mathbf{e)}\subset F$. This approach is generally used when $n\geq m$%
. The second possibility is to choose a free $R-$submodule $F$ of $R^{n}$
such that $row(\mathbf{E)}\subset F$, where $row(\mathbf{E)}$ is the $R-$%
submodule generated by the row vectors of $\mathbf{E}$. This approach is
generally used when $m\geq n$. In the following, we give more details on
these combinatorial approaches.

\subsection{First Approach}

We recall that this approach is generally used when $n\geq m$.

\begin{lemma}
\label{InfoSetDeco1} Let $\mathbf{H=}\left( h_{i,j}\right) \in
S^{(n-k)\times n}\ $whose row vectors are linearly independent, $\mathbf{s}$
an element of $S^{n-k}$. Suppose we want to solve an instance $\left(
\mathbf{H,s}\right) $ of the $\mathcal{RSD}$ problem with
\begin{equation}
\mathbf{eH}^{\top }\mathbf{=s}  \label{SyndromeEquation1}
\end{equation}%
where $\mathbf{e=}\left( e_{1},\ldots ,e_{n}\right) \in S^{n}$ and $rk(%
\mathbf{e})=r$. Let $F$ be a free $R-$submodule of $S$ of rank $u$. Assume
that $supp(\mathbf{e)}\subset F$. Let $\left\{ f_{1},\ldots ,f_{u}\right\} $
be a basis of $F$ and $x_{i,j}\in R$ such that, for all $j\in \left\{
1,\ldots ,n\right\} $,
\begin{equation}
e_{j}=\sum\limits_{i=1}^{u}x_{i,j}f_{i}.  \label{ErrorDecomposition1}
\end{equation}%
Then, Equation (\ref{SyndromeEquation1}) with unknown $\mathbf{e}$ can be
transformed into a system of linear equations over $R$ (that we denote by $%
\left( \mathcal{E}_{1}\right) $) with $m\left( n-k\right) $ equations and $%
n\times u$ unknowns $x_{i,j}$.
\end{lemma}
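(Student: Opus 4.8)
The plan is to substitute the decomposition of the error into the syndrome equation and separate the $S$-coefficients into $R$-coordinates using the fixed $R$-basis of $S$. First I would write the syndrome equation componentwise: for each $l\in\{1,\ldots,n-k\}$ we have $\sum_{j=1}^n e_j h_{l,j} = s_l$. Substituting \eqref{ErrorDecomposition1} gives $\sum_{j=1}^n\sum_{i=1}^u x_{i,j} f_i h_{l,j} = s_l$, where now the $x_{i,j}$ lie in $R$ while $f_i h_{l,j}$ and $s_l$ lie in $S$. Since $S$ is a free $R$-module of rank $m$, I would fix the $R$-basis $(b_1,\ldots,b_m)$ of $S$ already introduced in Section~\ref{sec:prelims} and expand each product $f_i h_{l,j} = \sum_{p=1}^m c^{(l)}_{i,j,p} b_p$ with $c^{(l)}_{i,j,p}\in R$, and similarly $s_l = \sum_{p=1}^m s_{l,p} b_p$ with $s_{l,p}\in R$.

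Next I would equate $R$-coordinates. Because $(b_1,\ldots,b_m)$ is an $R$-basis, the single equation $\sum_{j,i} x_{i,j} f_i h_{l,j} = s_l$ over $S$ is equivalent to the system of $m$ equations over $R$ obtained by matching the coefficient of each $b_p$: for every $p\in\{1,\ldots,m\}$, $\sum_{j=1}^n\sum_{i=1}^u c^{(l)}_{i,j,p}\, x_{i,j} = s_{l,p}$. This is linear in the unknowns $x_{i,j}$ with coefficients $c^{(l)}_{i,j,p}\in R$. Letting $l$ range over $\{1,\ldots,n-k\}$ and $p$ over $\{1,\ldots,m\}$ produces $m(n-k)$ linear equations over $R$; the unknowns are exactly the $x_{i,j}$ for $i\in\{1,\ldots,u\}$, $j\in\{1,\ldots,n\}$, hence $nu$ of them. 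I would call this system $(\mathcal{E}_1)$. Conversely, any $R$-solution $(x_{i,j})$ of $(\mathcal{E}_1)$ reassembles, via \eqref{ErrorDecomposition1}, into a vector $\mathbf{e}\in S^n$ with $supp(\mathbf{e})\subset F$ satisfying \eqref{SyndromeEquation1}, so the transformation is faithful in both directions.

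The only subtle point — and the step I expect to need the most care — is the claim that equating $R$-coordinates against the basis $(b_1,\ldots,b_m)$ is an equivalence rather than merely an implication; but this is immediate from the fact that $S$ is \emph{free} of rank $m$ over $R$, so the representation of an element of $S$ in the basis $(b_1,\ldots,b_m)$ is unique, exactly as in the matrix-representation discussion preceding Example~\ref{ZeroDivisor}. No division or invertibility in $R$ is used anywhere, which is what makes the argument go through over a chain ring despite the presence of zero divisors. I would close by noting that the hypothesis $supp(\mathbf{e})\subset F$ is what guarantees the $x_{i,j}$ can be taken in $R$ (rather than in $S$), so that $(\mathcal{E}_1)$ is genuinely an $R$-linear system; solving it and then reducing the resulting candidate by the rank condition $rk(\mathbf{e})=r$ recovers the error.
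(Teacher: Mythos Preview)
Your proof is correct and follows essentially the same approach as the paper: substitute the decomposition $e_j=\sum_i x_{i,j}f_i$ into the syndrome equation and then expand each of the $n-k$ resulting $S$-equations into $m$ equations over $R$ using a fixed $R$-basis of $S$. The only difference is packaging: the paper writes $\mathbf{e}=\mathbf{f}\mathbf{X}$ and applies the vectorization/Kronecker identity to obtain the compact form $(\mathbf{H}\otimes\mathbf{f})\,vec(\mathbf{X})=vec(\mathbf{s})$ before expanding over $R$, whereas you carry out the same computation componentwise.
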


\begin{proof}
Set $\mathbf{X}=\left( x_{i,j}\right) _{1\leq i\leq u,1\leq j\leq n}$ and $%
\mathbf{f=}\left( f_{1},\ldots ,f_{u}\right) $. Then, by (\ref%
{ErrorDecomposition1}), we have
\begin{equation*}
\mathbf{e=fX}\text{.}
\end{equation*}

So, (\ref{SyndromeEquation1}) becomes%
\begin{equation*}
\mathbf{fXH}^{\top }\mathbf{=s}\text{.}
\end{equation*}

Therefore, applying \cite[Lemma 4.3.1]{Horn1991topics}, we have%
\begin{equation}
\left( \mathbf{H}\otimes \mathbf{f}\right) vec\left( \mathbf{X}\right)
=vec\left( \mathbf{s}\right) .  \label{KroneckerProduct1}
\end{equation}

where $\otimes $ is the Kronecker product and $vec\left( \mathbf{X}\right) $
denotes the vectorization of the matrix $\mathbf{X}$, that is to say the
matrix formed by stacking the columns of $\mathbf{X}$ into a single column
vector. Since $S$ is a free $R-$module of rank $m$, (\ref{KroneckerProduct1}%
) can be expanded over $R$ into a linear system with $m\left( n-k\right) $
equations and $n \times u$ unknowns $x_{i,j}$.
\end{proof}

\begin{remark}
\label{Boun1}Let $\boldsymbol{A}$ be the $m\left( n-k\right) \times n u$
matrix which defines Equation $\left( \mathcal{E}_{1}\right) $ of Lemma \ref%
{InfoSetDeco1}.

1) If the column vectors of $\boldsymbol{A}$ are linearly independent, then $%
\left( \mathcal{E}_{1}\right) $ has at most one solution. By \cite[Lemma 2.6
]{Fan2014matrix}, if the column vectors of $\boldsymbol{A}$ are linearly
independent, then $nu\leq m\left( n-k\right) $, that is to say $u\leq
m\left( n-k\right) /n$. So, in practice, we choose $u=\left\lfloor m\left(
n-k\right) /n\right\rfloor $.

2) Assume that $nu\leq m\left( n-k\right) $. Then, by Proposition \ref%
{CountingSubmodule}, the probability that the column vectors of a random $%
m\left( n-k\right) \times nu$ matrix with entries from $R$ are linearly
independent is $\prod_{i=0}^{un-1}\left( 1-q^{i-m\left( n-k\right) }\right) $%
. So, in practice, the column vectors of $\boldsymbol{A}$ are linearly
independent with high probability.
\end{remark}

Lemma \ref{InfoSetDeco1} allows to give Algorithm \ref{DecodingAlgorithm1}.

\begin{algorithm}[h]
\label{DecodingAlgorithm1}
\caption{First Syndrome Decoding Algorithm}
\DontPrintSemicolon
\KwIn{

$\bullet $ $r$   the rank of the error;

$\bullet $ $\mathbf{H\in }S^{(n-k)\times n}\ $whose row vectors are
linearly independent;

$\bullet $ $\mathbf{s}$ an element of $S^{n-k}$ such that there is $\mathbf{%
e\in }S^{n}$ with $rk(\ev)=r \leq u $ and $\mathbf{eH}^{T}\mathbf{=s}$,

where $u:=\left\lfloor m\left( n-k\right) /n\right\rfloor $.

}

\KwOut{an element $\mathbf{e\in }S^{n}$ such that $rk(\ev)=r$ and $%
\mathbf{eH}^{T}\mathbf{=s}$.
}
$update \gets false $ \;

\While{update=false}{

    Choose a free $R-$submodule $F$ of $S$ of rank $u$.

    Choose a basis $\left\{ f_{1},\ldots ,f_{u}\right\} $ of $F$.

    Solve Equation $\left( \mathcal{E}_{1}\right) $ of Lemma \ref{InfoSetDeco1}.

    \eIf{$\left( \mathcal{E}_{1}\right) $ has no solution}{
        $update \gets false $}{
    Use a solution of $\left( \mathcal{E}_{1}\right) $ to compute $\ev$
as in (\ref{ErrorDecomposition1}).

    \eIf{$rk(\ev)\neq r$}{
        $update \gets false $}{
    $update \gets true $}
}
}
\Return{$\ev$}
\end{algorithm}

\begin{theorem}
\label{Complexity1}An average complexity of Algorithm \ref%
{DecodingAlgorithm1} is%
\begin{equation*}
\mathcal{O}\left( m\left( n-k\right) u^{2}n^{2}\beta \left( q,\nu
,r,m\right) /\beta \left( q,\nu ,r,u\right) \right)
\end{equation*}%
operations in $R$, where $\beta \left( q,\nu ,r,m\right) $ and $\beta \left(
q,\nu ,r,u\right) $ are defined in Proposition \ref{RankCounting2}.
\end{theorem}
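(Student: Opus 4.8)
The plan is to analyze Algorithm \ref{DecodingAlgorithm1} as a standard "guess-and-check" loop: each iteration has a fixed cost (dominated by solving the linear system $(\mathcal{E}_1)$), and the expected number of iterations is the reciprocal of the probability that a uniformly chosen free submodule $F$ of $S$ of rank $u$ contains $supp(\mathbf{e})$. Multiplying the two gives the claimed bound. First I would bound the per-iteration cost: the system $(\mathcal{E}_1)$ has $m(n-k)$ equations in $nu$ unknowns over $R$, and by Remark \ref{Boun1} we have $nu \le m(n-k)$, so Gaussian elimination over the chain ring $R$ (working modulo $\pi^\nu$, which affects only the constant) costs $\mathcal{O}\big(m(n-k)\cdot (nu)^2\big)$ operations in $R$; recomputing $\mathbf{e}$ from a solution via (\ref{ErrorDecomposition1}) and checking $rk(\mathbf{e}) = r$ (e.g. via a Smith normal form) is absorbed into the same order. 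This gives the factor $m(n-k)u^2n^2$.

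Next I would compute the success probability of a single iteration. Since $supp(\mathbf{e})$ is a submodule of $S$ of rank $r \le u$, a given free rank-$u$ submodule $F$ is "good" precisely when $supp(\mathbf{e}) \subseteq F$. I would count good $F$'s by noting that, viewing $S$ as a free $R$-module of rank $m$, the pair $\big(supp(\mathbf{e}), F\big)$ with $supp(\mathbf{e}) \subseteq F$ can be counted by first fixing $F$ — there are $\beta(q,\nu,u,m)$ free rank-$u$ submodules of $S$ — and then, inside the free rank-$u$ module $F$, counting rank-$r$ submodules containing $supp(\mathbf{e})$; alternatively, and more directly, I would use that the number of free rank-$u$ submodules of $S$ containing a fixed rank-$r$ submodule equals (by Lemma \ref{ExistenceOfF} this set is nonempty) $\beta(q,\nu, u-r, m-r)$ times an appropriate correction, and compare this to the total $\beta(q,\nu,u,m)$. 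The cleanest route is to argue that the probability a random free rank-$u$ submodule of $S$ contains $supp(\mathbf{e})$ is $\beta(q,\nu,r,u)/\beta(q,\nu,r,m)$: indeed, counting pairs $(\mathbf{e}\text{-support fixed}, F)$ both ways, the number of free rank-$u$ submodules $F$ of $S$ with $supp(\mathbf{e})\subseteq F$, summed against the number of rank-$r$ submodules of a free rank-$u$ module (which is $\beta(q,\nu,r,u)$), matches the number of rank-$r$ submodules of $S$ (which is $\beta(q,\nu,r,m)$) against the number of good $F$'s; a double-counting identity then yields that the expected number of iterations is exactly $\beta(q,\nu,r,m)/\beta(q,\nu,r,u)$.

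Finally I would combine the two estimates: expected cost $=$ (cost per iteration) $\times$ (expected number of iterations) $= \mathcal{O}\big(m(n-k)u^2n^2\big) \times \beta(q,\nu,r,m)/\beta(q,\nu,r,u)$, which is the stated complexity. The main obstacle I anticipate is the combinatorial counting in the middle step: one must be careful that the relevant submodules are \emph{free} of rank $u$ (not arbitrary of shape $\mu$), so the raw counts from Proposition \ref{CountingSubmodule} / Corollary \ref{RankCounting1} must be specialized via Example \ref{ShapeFreeModules}, and one must verify that the conditional count (free rank-$u$ submodules through a fixed rank-$r$ submodule) is \emph{independent of which} rank-$r$ submodule is fixed — this homogeneity is what makes the ratio clean and is exactly where Lemma \ref{ExistenceOfF} and the structure theory of modules over a finite chain ring (uniqueness of shape, Proposition \ref{ShapeOfModule}) are used. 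A minor secondary point is justifying that the "$rk(\mathbf{e}) \ne r$" rejection branch does not change the order of the complexity, which follows since each pass through the loop, successful or not, costs $\mathcal{O}\big(m(n-k)u^2n^2\big)$ and the loop terminates (in expectation) after $\beta(q,\nu,r,m)/\beta(q,\nu,r,u)$ passes.
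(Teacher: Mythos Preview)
Your approach is essentially the same as the paper's: bound the cost of one iteration by the cost of solving the linear system $(\mathcal{E}_1)$, namely $\mathcal{O}(m(n-k)u^2n^2)$ operations in $R$ (the paper cites \cite{Bulyovszky2017polynomial} rather than generic Gaussian elimination, but the bound is identical), and multiply by the reciprocal of the success probability $p=\beta(q,\nu,r,u)/\beta(q,\nu,r,m)$. The paper derives $p$ more directly---it simply asserts that $p$ equals the number of rank-$r$ submodules inside a free rank-$u$ module divided by the number of rank-$r$ submodules of $S$---so your double-counting detour and homogeneity worry, while legitimate, are not needed for the paper's (heuristic, ``average'') argument.
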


\begin{proof}
Since $R$ is a finite chain ring, we can use \cite[Algorithm 4.2]%
{Bulyovszky2017polynomial} to solve $\left( \mathcal{E}_{1}\right) $. As $%
un\leq m\left( n-k\right) $, by \cite{Bulyovszky2017polynomial}, $\left(
\mathcal{E}_{1}\right) $ can be solved in $\mathcal{O}\left( m\left(
n-k\right) n^{2}u^{2}\right) $ operations in $R$. So, an average complexity
to recover $\mathbf{e}$ is $\mathcal{O}\left( m\left( n-k\right)
u^{2}n^{2}/p\right) $ where $p$ is the probability that $supp(\mathbf{e)}%
\subset F$. Remark that $p$ is equal to the number of submodules of $S$ of
rank $r$ in a free submodule of $S$ of rank $u$ divided by the number of
submodules of $S$ of rank $r$. By Proposition \ref{RankCounting2}, $p=\beta
\left( q,\nu ,r,u\right) /\beta \left( q,\nu ,r,m\right) $. Thus, the result
follows. 
\end{proof}

\begin{remark}
\label{ApproximationProbability}In practice, we have $r\leq u/2$. Thus, from
Proposition \ref{RankCounting2}, we have
\begin{equation*}
\beta \left( q,\nu ,r,m\right) /\beta \left( q,\nu ,r,u\right) \approx
q^{\nu r\left( m-r\right) }/q^{\nu r\left( u-r\right) }=|R|^{r\left(
m-u\right) } = |R|^{r\left\lfloor mk/n\right\rfloor }
\end{equation*}%
where $\left\vert R\right\vert =q^{\nu }$ is the cardinality of $R$. This
approximation is analogous to the one given in \cite{GRS16} when $R$ is a
finite field with $q$ elements.
\end{remark}

\subsection{Second Approach}

We recall that this approach is generally used when $m\geq n$.

\begin{lemma}
\label{InfoSetDeco2} Let $\mathbf{H=}\left( h_{i,j}\right) \in
S^{(n-k)\times n}$ whose row vectors are linearly independent, $\mathbf{s}$
an element of $S^{n-k}$. Suppose we want to solve an instance $\left(
\mathbf{H,s}\right) $ of the $\mathcal{RSD}$ problem with
\begin{equation}
\mathbf{eH}^{\top }\mathbf{=s}  \label{SyndromeEquation2}
\end{equation}%
where $\mathbf{e}\in S^{n}$ and $rk(\mathbf{e})=r$. Let $\left( b_{1},\ldots
,b_{m}\right) $ be a basis of $S$ as an $R-$module and $\mathbf{E}$ a matrix
representation of $\mathbf{e}$ in this basis. Let $F$ be a free $R-$%
submodule of $R^{n}$ of rank $u$. Assume that $row(\mathbf{E})\subset F$.
Let $\mathbf{F}$ be the $u\times n$ matrix whose row vectors generate $F$
and $\mathbf{X}=\left( x_{i,j}\right) \in R^{m\times u}$ such that
\begin{equation}
\mathbf{E}=\mathbf{XF}  \label{ErrorDecomposition2}
\end{equation}%
Then, Equation (\ref{SyndromeEquation2}) with unknown $\mathbf{e}$ can be
transformed into a system of linear equations over $R$ (that we denote by $%
\left( \mathcal{E}_{2}\right) $) with $m\left( n-k\right) $ equations and $%
mu $ unknowns $x_{i,j}$.
\end{lemma}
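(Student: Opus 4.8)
The plan is to mirror the proof of Lemma~\ref{InfoSetDeco1}, replacing the substitution $\mathbf{e}=\mathbf{f}\mathbf{X}$ used there by the factorisation $\mathbf{E}=\mathbf{X}\mathbf{F}$ of the coordinate matrix of $\mathbf{e}$, and then linearising the syndrome equation with the same vectorisation identity. First I would record the two basic identities connecting $\mathbf{e}$ to the unknown matrix $\mathbf{X}$. Writing $\mathbf{b}=\left( b_{1},\ldots ,b_{m}\right) $ for the fixed $R$-basis of $S$, viewed as a $1\times m$ row vector over $S$, the definition of the matrix representation $\mathbf{E}$ of $\mathbf{e}$ reads $\mathbf{e}=\mathbf{b}\mathbf{E}$. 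The rows of $\mathbf{F}\in R^{u\times n}$ generate the free module $F$ of rank $u$, hence form an $R$-basis of $F$; since $row(\mathbf{E})\subset F$, each row of $\mathbf{E}$ is an $R$-linear combination of the rows of $\mathbf{F}$, which is exactly the relation $\mathbf{E}=\mathbf{X}\mathbf{F}$ for a (uniquely determined) matrix $\mathbf{X}=\left( x_{i,j}\right) \in R^{m\times u}$. Substituting, $\mathbf{e}=\mathbf{b}\mathbf{X}\mathbf{F}$, so the syndrome equation (\ref{SyndromeEquation2}) becomes
\begin{equation*}
\mathbf{b}\,\mathbf{X}\,\mathbf{F}\mathbf{H}^{\top }=\mathbf{s},
\end{equation*}
which is linear in the entries $x_{i,j}$ of $\mathbf{X}$.

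Next I would linearise this equation over $S$. Applying \cite[Lemma~4.3.1]{Horn1991topics}, i.e. the identity $vec(AXB)=(B^{\top }\otimes A)\,vec(X)$, with $A=\mathbf{b}$, $X=\mathbf{X}$ and $B=\mathbf{F}\mathbf{H}^{\top }$, the equation above is equivalent to
\begin{equation*}
\bigl( \mathbf{H}\mathbf{F}^{\top }\otimes \mathbf{b}\bigr) \,vec(\mathbf{X})=vec(\mathbf{s}),
\end{equation*}
a system of $n-k$ scalar equations over $S$ in the $mu$ unknowns $x_{i,j}\in R$. Finally, since $S$ is a free $R$-module of rank $m$ with basis $b_{1},\ldots ,b_{m}$, I would expand each of these $n-k$ equations over $S$ by writing every coefficient and every right-hand side entry in the basis $b_{1},\ldots ,b_{m}$ and using the $R$-linear independence of the $b_{i}$: each $S$-equation splits into exactly $m$ equations over $R$ with coefficients in $R$, producing the announced system $(\mathcal{E}_{2})$ with $m(n-k)$ equations and $mu$ unknowns $x_{i,j}$. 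By Lemma~\ref{ExistenceOfF}, a free $F$ with $row(\mathbf{E})\subset F$ and $r\le u$ always exists, so the hypotheses can indeed be met.

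The argument is essentially bookkeeping, and I do not anticipate a genuine obstacle. The only points requiring a little care are: that the rows of $\mathbf{F}$ really do form a basis of $F$, so that the passage $\mathbf{e}\leftrightarrow \mathbf{X}$ is well defined; and that, in contrast with Lemma~\ref{InfoSetDeco1}, the coefficient matrix $\mathbf{H}\mathbf{F}^{\top }\otimes \mathbf{b}$ of the intermediate system has entries in $S$ rather than in $R$ --- but this causes no difficulty, because turning $S$-linear equations into $R$-linear ones only uses that $S$ is $R$-free, exactly as in the last step of the proof of Lemma~\ref{InfoSetDeco1}.
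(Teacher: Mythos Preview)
Your proof is correct and follows essentially the same route as the paper: write $\mathbf{e}=\mathbf{b}\mathbf{E}=\mathbf{b}\mathbf{X}\mathbf{F}$, substitute into the syndrome equation, apply the vectorisation identity $vec(AXB)=(B^{\top}\otimes A)\,vec(X)$ to obtain $(\mathbf{H}\mathbf{F}^{\top}\otimes \mathbf{b})\,vec(\mathbf{X})=vec(\mathbf{s})$, and then expand each of the $n-k$ equations over $S$ into $m$ equations over $R$. The extra remarks you include (uniqueness of $\mathbf{X}$, the appeal to Lemma~\ref{ExistenceOfF}) are not needed for the lemma as stated but do no harm.
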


\begin{proof}
Set $\mathbf{b}=\left( b_{1},\ldots ,b_{m}\right) $. We have
\begin{equation}
\mathbf{e=bE}.  \label{ErrorDecomposition3}
\end{equation}%
So, (\ref{SyndromeEquation2}) becomes
\begin{equation*}
\mathbf{bXFH}^{\top }\mathbf{=s.}
\end{equation*}%
Therefore, applying \cite[Lemma 4.3.1]{Horn1991topics}, we have

\begin{equation}
\left( \mathbf{HF}^{\top }\otimes \mathbf{b}\right) vec\left( \mathbf{X}%
\right) =vec\left( \mathbf{s}\right) .  \label{KroneckerProduct2}
\end{equation}%
Since $S$ is a free $R-$module of rank $m$, (\ref{KroneckerProduct2}) can be
expanded over $R$ into a linear system with $m\left( n-k\right) $ equations
and $mu$ unknowns $x_{i,j}$.
\end{proof}

\begin{remark}
As in Remark \ref{Boun1}, if the column vectors of the $m\left( n-k\right)
\times um$ matrix which defines Equation $\left( \mathcal{E}_{2}\right) $ of
Lemma \ref{InfoSetDeco2} are linearly independent, then $mu\leq m\left(
n-k\right) $. So, in practice, we choose $u=n-k$.
\end{remark}

Lemma \ref{InfoSetDeco2} allows to give Algorithm \ref{DecodingAlgorithm2}.

\begin{algorithm}[h]
\label{DecodingAlgorithm2}
\caption{Second Syndrome Decoding Algorithm}
\DontPrintSemicolon
\KwIn{

$\bullet $ $r$   the rank of the error;

$\bullet $ $\mathbf{H\in }S^{(n-k)\times n}\ $whose row vectors are
linearly independent;

$\bullet $ $\mathbf{s}$ an element of $S^{n-k}$ such that there is $\mathbf{%
e\in }S^{n}$ with $rk(\ev)=r \leq n-k $ and $\mathbf{eH}^{T}\mathbf{=s}$.

}

\KwOut{an element $\mathbf{e\in }S^{n}$ such that $rk(\ev)=r$ and $%
\mathbf{eH}^{T}\mathbf{=s}$.
}

Choose a basis $\left( b_{1},\ldots ,b_{m}\right) $ of \ $S$ as $R-$module.

$update \gets false $ \;

\While{update=false}{

    Choose a free $R-$submodule $F$ of $R^{n}$ of rank $n-k$.

    Choose a basis $\left\{ \mathbf{F}_{1},\ldots ,\mathbf{F}_{n-k}\right\}$ of $F$.

    Solve Equation $\left( \mathcal{E}_{2}\right) $ of Lemma \ref{InfoSetDeco2}.

    \eIf{$\left( \mathcal{E}_{2}\right) $ has no solution}{
        $update \gets false $}{
    Use a solution of $\left( \mathcal{E}_{2}\right) $ to compute $\ev$
as in (\ref{ErrorDecomposition2}) and (\ref{ErrorDecomposition3}).

    \eIf{$rk(\ev)\neq r$}{
        $update \gets false $}{
    $update \gets true $}
}
}
\Return{$\ev$}
\end{algorithm}

\begin{theorem}
\label{Complexity2}An average complexity of Algorithm \ref%
{DecodingAlgorithm2} is%
\begin{equation*}
\mathcal{O}\left( m^{3}\left( n-k\right)^{3}\beta \left( q,\nu ,r,n\right)
/\beta \left( q,\nu ,r,n-k\right) \right)
\end{equation*}%
operations in $R$, where $\beta \left( q,\nu ,r,n\right) $ and $\beta \left(
q,\nu ,r,n-k\right) $ are defined in Proposition \ref{RankCounting2}.
\end{theorem}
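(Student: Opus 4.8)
The plan is to mimic the complexity analysis of Theorem \ref{Complexity1}, but working with the system $\left( \mathcal{E}_{2}\right)$ from Lemma \ref{InfoSetDeco2} instead of $\left( \mathcal{E}_{1}\right)$, and with submodules of $R^{n}$ of rank $n-k$ in place of free submodules of $S$ of rank $u$. The total average cost is the cost of one iteration of the \textbf{while} loop of Algorithm \ref{DecodingAlgorithm2} divided by the probability $p$ that a randomly chosen free rank-$(n-k)$ submodule $F$ of $R^{n}$ satisfies $row(\mathbf{E})\subset F$, where $\mathbf{E}$ is the matrix representation of $\mathbf{e}$.

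First I would bound the cost of a single iteration. By Lemma \ref{InfoSetDeco2}, setting $u=n-k$, solving $\left( \mathcal{E}_{2}\right)$ amounts to solving a linear system over the finite chain ring $R$ with $m(n-k)$ equations and $mu = m(n-k)$ unknowns. Since $R$ is a finite chain ring, \cite[Algorithm 4.2]{Bulyovszky2017polynomial} applies, and solving a system of this size costs $\mathcal{O}\!\left( m(n-k)\cdot (m(n-k))^{2} \right) = \mathcal{O}\!\left( m^{3}(n-k)^{3} \right)$ operations in $R$; the subsequent reconstruction of $\ev$ via (\ref{ErrorDecomposition2})–(\ref{ErrorDecomposition3}) and the rank test are dominated by this bound. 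Next I would identify $p$. As in the proof of Theorem \ref{Complexity1}, $p$ equals the number of submodules of rank $r$ contained in a fixed free submodule of $R^{n}$ of rank $n-k$, divided by the total number of rank-$r$ submodules of $R^{n}$. A free $R$-module of rank $n-k$ has, by Example \ref{ShapeFreeModules} and Corollary \ref{RankCounting1}, exactly $\beta(q,\nu,r,n-k)$ submodules of rank $r$, while $R^{n}$ has $\beta(q,\nu,r,n)$ of them; hence $p = \beta(q,\nu,r,n-k)/\beta(q,\nu,r,n)$. Combining, the average complexity is $\mathcal{O}\!\left( m^{3}(n-k)^{3}\,\beta(q,\nu,r,n)/\beta(q,\nu,r,n-k) \right)$, as claimed.

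The point that needs the most care is the counting argument for $p$, i.e. justifying that the number of rank-$r$ submodules \emph{inside} a free rank-$(n-k)$ submodule is $\beta(q,\nu,r,n-k)$ and is independent of the particular free submodule chosen. This follows because any free $R$-module of rank $n-k$ is isomorphic (as an $R$-module) to $R^{n-k}$, so Proposition \ref{RankCounting2} — which counts rank-$k$ submodules of an abstract free module of the relevant rank — applies verbatim; one must also note, exactly as in Remark \ref{Boun1}, that $row(\mathbf{E})$ does have rank $r$ (since $rk(\ev)=r$) and that, by Lemma \ref{ExistenceOfF}, such an $F\supset row(\mathbf{E})$ exists, so the event is non-empty and the probability is well defined. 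A minor additional remark is that the expected number of iterations of the \textbf{while} loop is $1/p$ only up to the standard geometric-distribution argument: each iteration independently finds a good $F$ with probability $p$, and when it does the linear system $\left( \mathcal{E}_{2}\right)$ is consistent and yields the correct $\ev$ after the rank check; iterations that pick a bad $F$ are discarded. This is entirely parallel to the reasoning already carried out for Algorithm \ref{DecodingAlgorithm1}, so no genuinely new difficulty arises.
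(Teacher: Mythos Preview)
Your proposal is correct and follows exactly the approach the paper intends: the paper's own proof of Theorem~\ref{Complexity2} consists of the single sentence ``The proof is similar to that of Theorem~\ref{Complexity1},'' and what you have written is precisely that similar proof spelled out in full, with the linear-algebra cost $\mathcal{O}(m(n-k)\cdot(m(n-k))^{2})=\mathcal{O}(m^{3}(n-k)^{3})$ for the system $(\mathcal{E}_{2})$ and the probability $p=\beta(q,\nu,r,n-k)/\beta(q,\nu,r,n)$ obtained from Proposition~\ref{RankCounting2}.
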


\begin{proof}
The proof is similar to that of Theorem \ref{Complexity1}.
\end{proof}

\begin{remark}
As in Remark \ref{ApproximationProbability}, we have
\begin{equation*}
\beta \left( q,\nu ,r,n\right) /\beta \left( q,\nu ,r,n-k\right) \approx
|R|^{rk}\text{.}
\end{equation*}
\end{remark}

\begin{example}
Consider the linear code $\mathcal{C}$ defined in Example \ref{LinearCode}.
Since the minimum rank distance of $\mathcal{C}$ is $3$, then the error
correction capability of $\mathcal{C}$ is $1$. We consider the received word
\begin{equation*}
\mathbf{y=}\left( 4a^{3}+a^{2}+2a+3,4a^{3}+4,7a+2,6a^{3}+4a^{2}+a+5\right)
\end{equation*}%
of $\mathcal{C}$. Note that $\mathcal{C}$ is not a free module. Thus, to
decode $\mathbf{y}$ we consider that $\mathbf{y}$ is a received word from $%
E\left( \mathcal{C}\right) $. A parity-check matrix of $E\left( \mathcal{C}%
\right) $ is
\begin{equation*}
\mathbf{H}=\left(
\begin{array}{cccc}
6a^{3}+5a^{2}+5 & 5a^{3}+5a^{2}+2a+1 & 7 & 0 \\
7a^{3}+5a^{2}+a+4 & 5a^{3}+2a^{2}+4a & 0 & 7%
\end{array}%
\allowbreak \right)
\end{equation*}%
and the the syndrome of $\mathbf{y}$ is
\begin{equation*}
\mathbf{s=yH}^{\top }=\left( 4a^{3}+4a^{2}+2a,6a^{2}+6a+4\right) .
\end{equation*}%
We run Algorithm \ref{DecodingAlgorithm2} in SageMath \cite{sagemath2022}
with inputs $\mathbf{H}$, $\mathbf{s}$, and $r=1$. This algorithm returns%
\begin{equation*}
\mathbf{e}=(2+6a^{2},0,4+4a^{2},6+2a^{2}).
\end{equation*}%
So the transmitted codeword is
\begin{equation*}
\mathbf{y}-\mathbf{e}=\left(
4a^{3}+3a^{2}+2a+1,4a^{3}+4,4a^{2}+7a+6,6a^{3}+2a^{2}+a+7\right) .
\end{equation*}
\end{example}

\begin{remark}
Theorems \ref{Complexity1} and \ref{Complexity2} give an average complexities
for solving the $\mathcal{RD}$ problem over finite chain rings. According to
Proposition \ref{LocalizationOfRDproblem}, solving the $\mathcal{RD}$
problem over the finite principal ideal ring $R=R_{(1)}\times \cdots \times
R_{(\rho )}$ is equivalent to solving the same problem over each finite
chain ring $R_{(j)}$ for $j$ in $\left\{ 1,\ldots ,\rho \right\} $. So, an
average complexity for solving the $\mathcal{RD}$ problem over the finite
principal ideal ring $R$ is the sum of average complexities over $R_{(j)}$
for $j$ in $\left\{ 1,\ldots ,\rho \right\} $.
\end{remark}

\section{Conclusion}

\label{sec:conclusion} 
We have shown that solving the rank decoding problem
over finite principal ideal rings is at least as hard as the rank decoding problem over
finite fields. We have also shown that computing the minimum rank distance for
linear codes over finite principal ideal rings is equivalent to the same
problem for linear codes over finite fields as in the case of hamming metric
\cite{Walker1999algebraic,Greferath2004finite,Dougherty2009mds}. All these put together with the fact that recent powerful algebraic methods \cite{BBCGPSTV20,BBBGNRT20} for solving the Rank Decoding Problem over finite
fields do not apply directly to finite rings with zero divisors as we have observed in this paper make the $\mathcal{RD}$ problem over finite rings very  promising for code-based cryptography.   

We have also provided combinatorial type algorithms similar to \cite%
{GRS16,AGHT18} for solving the rank decoding problem over finite chain
rings. The average complexities of the underlined algorithms are also given.

An interesting perspective will be to study the cases in which algebraic
algorithms do apply. As an example, one can investigate the possibility of using the properties of linearized polynomials generalized in \cite{Kamche2019rank} to give an algebraic method as in \cite{GRS16} for
solving the rank decoding problem over finite principal ideal rings.

\bibliographystyle{IEEEtran}
\bibliography{codecrypto4}

\end{document}